\documentclass[letterpaper]{article}

\usepackage[letterpaper,top=1.15in,bottom=0.85in,left=1.5in,right=1.5in]{geometry}
\usepackage{times}
\usepackage{natbib}
\setcitestyle{authoryear,round,citesep={;},aysep={,},yysep={;}}

\setlength{\parindent}{0pt}
\setlength{\parskip}{.5em}
\usepackage[small,compact]{titlesec}

\usepackage[utf8]{inputenc} 
\usepackage[T1]{fontenc}    
\usepackage[colorlinks,allcolors=blue!70!black]{hyperref}       
\usepackage{url}            
\usepackage{booktabs}       
\usepackage{amsfonts}       
\usepackage{amssymb}
\usepackage{amsmath}
\usepackage{amsthm}
\usepackage{nicefrac}       
\usepackage{microtype}      
\usepackage{graphicx} 
\usepackage{subfig}
\usepackage{enumitem}
\usepackage{changes}

\usepackage{doipubmed}  

\usepackage{tikz}
\usepackage[algo2e,ruled,vlined]{algorithm2e}

\newtheorem{theorem}{Theorem}
\newtheorem{lemma}{Lemma}
\newtheorem{corollary}[theorem]{Corollary}
\newtheorem{definition}[theorem]{Definition}
\newtheorem{proposition}[theorem]{Proposition}
\newtheorem{observation}[theorem]{Observation}

\usepackage{relsize}

\newcommand{\mG}{\boldsymbol{\mathrm{G}}}
\newcommand{\mc}{\boldsymbol{\mathrm{c}}}
\newcommand{\Xhat}{\widehat{X}}

\newcommand{\Xstar}{X^{*}}
\newcommand{\Xfeas}{X'}

\newcommand{\Xc}{\mathcal{X}}
\newcommand{\R}{\mathbb{R}}
\newcommand{\Rplus}{\mathbb{R}_{\geqslant 0}}
\newcommand{\OPT}{\mathrm{OPT}}
\DeclareMathOperator{\MST}{MST}
\DeclareMathOperator{\ALPS}{ALPS}

\title{Approximation Algorithms for Combinatorial Optimization with Predictions}


\usepackage{authblk}

\author[1]{Antonios Antoniadis}
\affil[1]{University of Twente}
\author[2]{Marek Eliáš}
\author[2]{Adam Polak}
\author[2]{Moritz Venzin}
\affil[2]{Bocconi University}

\date{}

\begin{document}

\maketitle

\begin{abstract}
\noindent
We initiate a systematic study of utilizing predictions to improve over approximation guarantees of classic algorithms, without increasing the running time.
We propose a systematic method for a wide class of optimization problems that ask to select a feasible subset of input items of minimal (or maximal) total weight.
This gives simple (near-)linear time algorithms for, e.g., Vertex Cover, Steiner Tree, Min-Weight Perfect Matching, Knapsack, and Clique.
Our algorithms produce optimal solutions when provided with perfect predictions and their approximation ratios smoothly degrade with increasing prediction error.
With small enough prediction error we achieve approximation guarantees that are beyond reach without predictions in the given time bounds, as exemplified by the NP-hardness and APX-hardness of many of the above problems.
Although we show our approach to be optimal for this class of problems as a whole, there is a potential for exploiting specific structural properties of individual problems to obtain improved bounds; we demonstrate this on the Steiner Tree problem.
We conclude with an empirical evaluation of our approach.
\end{abstract}

\section{Introduction}

Combinatorial optimization studies problems of selecting an optimal solution, with respect to a given cost function, from a discrete set of potential solutions. This framework is ubiquitous, both in theory and in practice, finding application in a vast number of areas, e.g., resource allocation, machine learning, efficient networks, or logistics. However, these problems are typically NP-hard, and hence they cannot be solved to optimality in polynomial time (unless P equals NP). It is thus necessary to develop efficient algorithms even if this comes at the cost of relaxing the optimality requirement. This trade-off between optimality and tractability is the main paradigm of the (classic) theory of approximation algorithms, see, e.g., \citep{ShmoysWilliamson}.

Despite the success of the classic analysis of approximation algorithms, its distinction between tractable and intractable problems is often too crude. Indeed, for modern big data applications, a running time polynomial in the input size can hardly be considered efficient, as we often require algorithms with a running time that is a small polynomial or even linear in the input size. This more fine-grained point of view on algorithm design has received considerable attention in recent years. Many combinatorial optimization problems (e.g., Vertex Cover, Steiner Tree, or Knapsack) admit simple and fast (near-)linear time constant-factor approximation algorithms~\citep{Bar-YehudaE81, Mehlhorn88, Dantzig57}, but improving upon their approximation guarantees often seems to require significantly more running time (sometimes even exponentially, as is the case for Vertex Cover~\citep{KhotR08}). However, it is important to note that such limitations are based on worst-case analysis, and do not necessarily represent the difficulty of a typical instance.

It is natural to expect that instances arising in practice can be modelled as coming from a fixed distribution, making them amenable to machine-learning techniques. Using historic input data and some (possibly computationally expensive) training, we can hope to guide an efficient algorithm with poor (worst-case) approximation guarantee to obtain near-optimal solutions.

\subsection{Our contribution}

In this paper, we initiate a systematic study of utilizing predictions to improve over the approximation guarantees of classic algorithms without increasing the running time.
This approach fits the current line of research on utilizing predictions to improve performance of algorithms, started by the seminal papers of \citet{KraskaBCDP18} and \citet{LykourisV21}. We focus on a broad class of optimization problems, which we call \emph{selection problems}, and which captures many fundamental problems in combinatorial optimization, e.g., Set Cover, Travelling Salesperson Problem (TSP), Steiner Tree, or Knapsack.
\begin{definition}[Selection problem]
We are given $n$ items, numbered with integers from $[n]:=\{1, 2, \ldots, n\}$, with non-negative weights $w\colon [n] \to \Rplus$ and an implicit collection $\Xc \subseteq 2^{[n]}$ of feasible subsets of items. Our task is to find a feasible solution $X\in \Xc$ minimizing (or maximizing) its total weight $w(X) := \sum_{i\in X} w(i)$.
\end{definition}
Note that the complexity of this task comes from the size of $\Xc$, which is usually exponential in~$n$.

Our work considers the \emph{offline} setting, where the whole input is available to the algorithm from the start. This is in contrast to much of the to-date research on learning-augmented algorithms, which focuses on \emph{online} problems,
where the input is not known to the algorithm in advance and the predictions
are used to decrease uncertainty about the future
(see the survey of \citet{MitzenmacherV20}).
So far, offline problems were studied mostly in the \emph{warm-start} setting, where predictions in the form of solutions to past instances are used to speed up exact algorithms (see Section~\ref{sec:related} on related work).
The focus of such works is on the dependence between the running time and the
quality of the predictions.

Our target is to maintain a superb running time in all situations,
and we study the dependence of approximation ratio on the prediction quality.
Having access to past instances, one can hope to learn which items are likely to be part of an optimal solution and provide the set of such items to the algorithm as a prediction.
Utilizing such prediction 
comes with two main challenges.

\emph{(1) The predicted set of items may be infeasible.}
Ensuring feasibility of solutions produced by deep-learning models is a challenging problem, and enforcing even simple combinatorial constraints requires 
very complex neural architectures \citep{BengioLP21}.
In order to utilize potentially infeasible predictions effectively,
we need algorithms that can transform the prediction into a feasible solution
without discarding the correct parts of the prediction.
We carefully design a \emph{black-box} mechanism that turns \emph{any} approximation
algorithm into a learning-augmented algorithm (utilizing possibly infeasible predictions)
whose approximation ratio smoothly deteriorates with increasing prediction error.
Our approach gives
(near-)linear time algorithms for many problems, e.g., Vertex Cover, Steiner Tree, Min-Weight Perfect Matching, Knapsack, and Clique.

\emph{(2) The predicted set of items may contain costly mispredictions.}
When given a mostly correct prediction, it is certainly possible to detect a single mispredicted item with a huge weight, say, larger than the total weight of the optimal solution. Our goal is to push this idea to its limits by detecting mispredictions which are as small as possible. We study this challenge on the case of the Steiner Tree problem. We develop an algorithm that can detect and eliminate mispredictions of cost larger than the cost of a single connection between a pair of terminals in an optimal solution.

\subsection{Our results in more detail}

Before stating our results formally, we have to define two crucial quantities: the approximation ratio and the prediction error.

\paragraph{Approximation ratio.} We measure the quality of solutions produced by approximation algorithms using the standard notion of \emph{approximation ratio}. For minimization problems, we say that an algorithm is a $\rho$-approximation algorithm if we are always guaranteed that $w(X) \leqslant \rho \cdot w(\Xstar)$, where $X$ denotes the solution returned by the algorithm, and $\Xstar$ denotes an optimal solution for the same instance. Analogously, for maximization problems, a $\rho$-approximate algorithm always satisfies $w(X) \geqslant \rho \cdot w(\Xstar)$. We refer to $\rho$ as the approximation ratio. For minimization problems we always have $\rho \geqslant 1$ and smaller ratios are better; for maximization problems $\rho \in [0, 1]$ and larger ratios are better.

\paragraph{Prediction and prediction error.} The prediction received by our algorithms is an arbitrary (not necessarily feasible) set of items $\smash{\Xhat \subseteq [n]}$. We say that $\smash{\Xhat}$ \emph{has error $(\eta^+, \eta^-)$ with respect to a solution $X$}, if
\[\eta^+ := w(\Xhat \setminus X), \quad \text{and} \quad \eta^- := w(X \setminus \Xhat),\]
i.e., $\eta^+$ is the total weight of false positives, and $\eta^-$ is the total weight of false negatives. Usually we consider the prediction error with respect to an optimal solution, denoted by $\Xstar$. We note that such predictions are PAC-learnable (see Section~\ref{sec:introremarks}).
In our performance bounds, the prediction error can be always considered with respect to \emph{the closest} optimal solution.

Let us also remark that an alternative setting, where each item $i \in [n]$ is associated with a fractional prediction $p_i \in [0,1]$ (supposed to denote the a \emph{confidence} that $i$ is part of the optimal solution), is actually equivalent to the setting we consider. Indeed, one can convert such fractional predictions to a set $\smash{\hat{X}}$ with simple randomized rounding (adding each item $i$ to $\smash{\hat{X}}$ independently with probability $p_i$) and the expected value of the prediction error stays the same.

\paragraph{Minimization problems.}
Let $\Pi$ be a problem of selecting a feasible solution $X \in \Xc$ of minimum total weight.
In Section~\ref{sec:minimization}, we show a black-box approach to turn any off-the-shelf $\rho$-approximation algorithm $A$ for~$\Pi$ into a learning-augmented algorithm  for $\Pi$ with approximation ratio 
\[1+\frac{\eta^+ + (\rho-1) \cdot \eta^-}{\OPT}.\]
Here, $(\eta^+, \eta^-)$ denotes the error of the prediction $\smash{\Xhat}$ given to the algorithm with respect to an optimal solution $\Xstar$ of weight $\OPT := w(\Xstar)$.
The asymptotic running time of the resulting algorithm is the same as that of~$A$.

To gain some intuition about the above approximation ratio guarantee, note that for the trivial prediction $\smash{\Xhat} = \emptyset$ we have $\eta^+ = 0$ and $\eta^- = \OPT$, and in turn
$1+(\eta^+ + (\rho-1) \cdot \eta^-)/\OPT = \rho$,
which squarely corresponds to simply running the (prediction-less) algorithm $A$. This shows that the $\rho - 1$ factor in front of $\eta^-$ is necessary.

In Section~\ref{sec:minimization-applications} we give several examples of how this black-box approach can be applied in order to obtain (near-)linear time learning-augmented approximation algorithms for some fundamental problems in combinatorial optimization, namely Minimum (and Min-Weight) Vertex Cover, Minimum (and Min-Weight) Steiner Tree, and Min-Weight Perfect Matching (in graphs with edge weights satisfying the triangle inequality).

\paragraph{Maximization problems.}
In Section~\ref{sec:maximization} we give a similar result for maximization problems.
Let $\Pi$ be a problem of selecting a set \emph{maximizing} the total weight from the collection of sets of items $\Xc$. Let $A$ be a $\rho$-approximation algorithm for the corresponding \emph{complementary}
problem of selecting a set \emph{minimizing} the total weight over sets of items $Y$ such that
$([n] \setminus Y) \in \Xc$.
We construct an algorithm for $\Pi$
with running time
asymptotically the same as that of $A$ and with approximation ratio
\[1-\frac{(\rho-1) \cdot \eta^+ + \eta^-}{\OPT}\]
given predictions
of error $(\eta^+, \eta^-)$ with respect to an optimal solution $X^*$ of weight $\OPT := w(X^*)$.

Even though the notion of the complementary problem may not seem intuitive at first, we note that for many natural problems the complementary problem also happens to be a natural and well studied problem, e.g., Vertex Cover is the complementary problem of Independent Set. In Section~\ref{sec:maximization-applications} we elaborate on how to apply our black-box construction to Clique and Knapsack.

\paragraph{Lower bounds.}

In Section~\ref{sec:lower-bounds},
we show that our black-box approach from Sections~\ref{sec:minimization} and~\ref{sec:maximization}, despite being simple, cannot be improved for the class of selection problems as a whole. More specifically, regarding minimization problems, we show that for the Vertex Cover problem with predictions, any (polynomial-time) learning-augmented algorithm with an approximation ratio with a better dependence on the prediction error would contradict the Unique Games Conjecture (UGC), which is a standard assumption in computational complexity but to the best of our knowledge has not been used before in the context of learning-augmented algorithms.
Regarding maximization problems, we give a similar UGC-based lower bound for the Clique and Independent Set problems.

\paragraph{Refined algorithm for Steiner Tree.}
Although our lower bounds are tight for the considered classes of combinatorial optimization problems as a whole, they do not rule out refined upper bounds, e.g., for specific problems or in terms of other (more fine-grained) measures of prediction errors. In Section~\ref{sec:steiner-tree} we propose such a refined algorithm for the Steiner Tree problem.
In this problem, a small number of false positives
with a large weight
can have a large impact on
the performance of our generic black-box algorithm from Section~\ref{sec:minimization}.
Our refined algorithm is guided by a hyperparameter $\alpha$ in order to detect and avoid false positives with  high weight. It is based on a $2$-approximation algorithm called the \emph{MST heuristic}~\citep{KouMB81,Mehlhorn88}. The contribution of false positives to our algorithm's performance guarantee depends only on their number, and not on their weights, and it is capped by the cost of individual connections made by the MST heuristic (without predictions).
The final approximation guarantee follows from a careful analysis of
how the output of our
algorithm converges to the prediction as its hyperparameter~$\alpha$ increases.
We also show that our analysis of this algorithm
is tight.

\paragraph{Experimental results.} Section~\ref{sec:experiments} concludes the paper with an experimental evaluation of our refined Steiner Tree algorithm on graphs from the 2018 Parameterized Algorithms and Computational Experiments (PACE) Challenge~\citep{BonnetS18}, using as a benchmark the winning Steiner Tree solver from that challenge~\citep{CIMAT}. These experiments allow us to better understand the impact of the hyperparameter~$\alpha$ on the performance of our algorithm. They also demonstrate that (for a sufficiently concentrated input distribution) we can find near-optimal solutions in time in which conventional algorithms can achieve only rough approximations.

\subsection{Further remarks on our setting}
\label{sec:introremarks}

\paragraph{Learnability.}
The predictions that our algorithms require are PAC-learnable via the following simple argument. First, since the space of possible predictions is finite and its size is single exponential in $n$, it suffices to perform empirical risk minimization (ERM) on a polynomial number of samples (see, e.g., \citet[Theorem 5]{PolakZ24}). Second, 
ERM for the combined prediction error $\eta^+ + \eta^-$ boils down to taking a coordinate-wise majority vote of solutions to the sampled instances. We use this approach in our experiments in Section~\ref{sec:experiments}.

At the same time, our setting is flexible enough to allow for other methods of generating predictions. For instance, it is not hard to imagine a deep-learning model that assigns to each input element the probability that it belongs to an optimal solution~\citep{JoshiCRL22,AhnSS20}. Then, a prediction can be obtained by sampling each element with the assigned probability.

We remark that any such learning is likely to be computationally expensive, but this should come at no surprise, because the resulting predictions can then be utilized by our learning-augmented algorithms to ``break'' known lower bounds. The time saved this way must be spent somewhere else, i.e., during learning. The advantage is that, when we are repeatedly solving similar recurring instances, this costly learning process is performed only once, and the resulting predictions can be (re-)used multiple times.

\paragraph{Infeasible predictions.}

We stress out that it is an absolutely crucial characteristic of our work that our algorithms accept as predictions sets that are not necessarily feasible solutions.\footnote{It is common among learning-augmented algorithms. E.g., in the paper on warm-starting max-flow by \citet{DaviesMVW23} most of the technical insights are in the part of the algorithm that projects any infeasible prediction into a feasible solution.}
\citet{BengioLP21} argue that ensuring feasibility of predictions significantly increases the complexity of the learning process, with difficulties specific to different types
of combinatorial constraints.
Accepting infeasible predictions allows for simpler and possibly more versatile
learning approaches like the ones outlined in the previous paragraphs.

Even when the input changes very slightly, a previously feasible solution may not be feasible anymore, so one should not expect feasibility of predictions based on past data. One such example scenario is given in our experiments on the Steiner Tree problem in Section~\ref{sec:experiments}, where the underlying graph is fixed and the set of terminals changes, hence changing the set of feasible solutions. It might be the case that a part of the solution is recurring while the other part is changing constantly, from instance to instance. Then, a learning algorithm can easily predict the stable part of the solution, but in advance it is hard to extend such a partial solution into a complete feasible solution with reasonable accuracy.

\subsection{Related work}
\label{sec:related}

\paragraph{Learning-augmented approximation algorithms.} Approximation algorithms with predictions have so far received very little attention and have only been investigated for specific problems and in restricted settings. 

\citet{BampisEX24} studied \emph{dense} variants of 
several hard problems -- such as Max Cut, Max $k$-SAT, or $k$-Densest Subgraph -- which all are examples of maximization selection problems\footnote{E.g., for Max Cut, $\Xc \subseteq 2^E$ contains all sets of edges that correspond to a cut in the input graph.}.
In the dense setting,
these problems admit polynomial-time
approximation schemes. In particular, for any fixed $\epsilon>0$,
there is a $(1-\epsilon)$-approximation
algorithm for Max Cut by \citet{AroraKK99}
that runs in time $\smash{O(n^{1/\epsilon^2})}$,
i.e., exponential in the precision parameter $1/\epsilon$.
\citet{BampisEX24} propose an algorithm that always
runs in time $O(n^{3.5})$ and achieves approximation ratio
depending on the quality of the received prediction.
An interesting feature of their work is that the algorithms
in the dense setting need only a small sample of
the predicted solution to achieve their guarantees.
In particular, they only need to know 
$O(\mathrm{poly}(\log n/\epsilon))$ bits of information about items
belonging or not belonging to the optimal solution to achieve
an approximation ratio of the form $1-\epsilon-f(\eta)$,
where $\eta$ denotes the prediction error.

\citet{ErgunFSWZ22} and \citet{GamlathLNS22} independently proposed a linear time algorithm
for $k$-Means Clustering that receives labels of the input
points as predictions.
This problem can be formulated as a minimization selection problem only in the special case where the number of potential centers
is bounded (e.g., if the potential centers are the input points
themselves).
If the prediction has per-cluster label error rate at most $\alpha$ with respect
to some $(1+\alpha)$-approximate solution,
their algorithm achieves an approximation ratio of $1+O(\alpha)$.
The algorithm uses techniques from robust
statistics to identify outliers in the predicted labeling. \citet{NguyenCN23}~slightly improve upon this guarantee in their follow-up work.

There are also two very recent works, on the Max Cut~\citep{CohenAddad24} and Independent Set~\citep{BravermannEtAl2024} problems, in the setting with \emph{$\epsilon$-accurate predictions}. In this setting, each input item comes with a label (indicating whether the item belongs to a fixed optimal solution or not), which is correct with probability $1/2+\epsilon$, \emph{independently} from other labels.
The two papers show how to breach the respective approximation ratio barriers of $0.878$ and $\smash{n^{1-o(1)}}$,
for any $\epsilon > 0$.
Note that our results assume that the incorrect parts of the prediction are selected
adversarially rather than randomly.

\paragraph{Other learning-augmented algorithms.}
Another related line of work is that on exact offline algorithms
with \emph{warm start}. In this setting, the algorithm, in contrast to starting its computation from scratch, has access to predictions or information (e.g., a partial solution) that allow it to start computing from a ``more advanced'' state, leading to an improvement in the running time.
Such results include
\citet{KraskaBCDP18} on binary search, 
\citet{DinitzILMV21,ChenSVZ22} on Bipartite Matching,
\citet{FeijenS21,LattanziSV23} on Shortest Path,
\citet{PolakZ24,DaviesMVW23} on Max Flow,
and \citet{BaiC23} on Sorting.
There are also works considering
multiple predictions
\citep{DinitzILMV22,SakaueO22}
and a work by \citet{TangAF20}
on using reinforcement learning
to improve the cutting plane heuristic
for Integer Programming.

Since the seminal papers by \citet{KraskaBCDP18} and \citet{LykourisV21},
which initiated the study of learning-augmented algorithms in the modern sense,
many online computational problems were considered.
There are papers on, e.g.,
ski rental~\citep{PurohitSK18},
secretary problem~\citep{DuttingLLV24},
energy efficient scheduling \citep{BamasMRS20}, online page migration \citep{IndykMMR22}, online TSP \citep{BerardiniLMMSS22},
and flow-time scheduling \citep{AzarLT21,AzarLT22}.
Further related works can be found on the website by~\citet{website}.

\section{Minimization problems}
\label{sec:minimization}

Our algorithm for minimization selection problems with linear objective
receives a prediction $\smash{\Xhat \subseteq [n]}$
which may not be feasible.
It changes the weight of each item in $\smash{\Xhat}$ to~$0$
and runs a conventional $\rho$-approximation algorithm on the problem
with those modified weights.
This way,
we obtain a solution
which is always feasible
and its quality is described
by the following theorem.
In Section~\ref{sec:lower-bounds}, we show that this simple approach already matches a lower bound based on the UGC.

\begin{theorem}
\label{thm:blackboxmin}
Let $\Pi$ be a minimization selection problem, and let $A$ be a $\rho$-approximation algorithm for $\Pi$ running in time $T(n)$.
Then, there exists an $O(T(n))$-time learning-augmented approximation algorithm for $\Pi$
with the following guarantee:
Upon receiving a (not necessarily feasible) predicted solution
$\Xhat \subseteq [n]$,
it outputs a solution $X$ such that, for any feasible solution
$\Xfeas\in \Xc$, we have
$w(X) \leqslant w(\Xfeas) + \eta^+ + (\rho-1) \cdot \eta^-$,
where $(\eta^+, \eta^-)$ is the error of $\Xhat$ with respect to $X'$.
\end{theorem}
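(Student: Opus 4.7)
The plan is to prove the theorem directly from the algorithm description: zero out the weights of all predicted items and invoke the black-box $\rho$-approximation~$A$ on this modified instance. Concretely, define modified weights $w'$ by $w'(i) = 0$ for $i \in \Xhat$ and $w'(i) = w(i)$ otherwise. Since the feasibility collection $\Xc$ is unchanged, $A$ returns in time $O(T(n))$ a solution $X \in \Xc$ which is $\rho$-approximate with respect to $w'$. The task is then to translate the guarantee from the modified instance back to the original weights.

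The first key step is to observe that for any feasible $\Xfeas$, the modified weight $w'(\Xfeas)$ equals exactly $w(\Xfeas \setminus \Xhat) = \eta^-$, because items of $\Xfeas$ that lie in $\Xhat$ contribute zero under $w'$. Applying the $\rho$-approximation guarantee on the modified instance (with $\Xfeas$ as a reference feasible solution) yields
\[
w'(X) \;\leqslant\; \rho \cdot w'(\Xfeas) \;=\; \rho \cdot \eta^-.
\]

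The second key step is to bound the gap $w(X) - w'(X)$, which equals $w(X \cap \Xhat)$. Since weights are non-negative, $w(X \cap \Xhat) \leqslant w(\Xhat) = w(\Xhat \cap \Xfeas) + w(\Xhat \setminus \Xfeas) = w(\Xhat \cap \Xfeas) + \eta^+$, and using $w(\Xhat \cap \Xfeas) = w(\Xfeas) - \eta^-$ we arrive at
\[
w(X) \;\leqslant\; \rho \cdot \eta^- + w(\Xfeas) - \eta^- + \eta^+ \;=\; w(\Xfeas) + \eta^+ + (\rho-1)\cdot \eta^-,
\]
which is exactly the claimed bound.

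There is no serious obstacle: the argument is a short chain of inequalities. The one moment that requires attention is the realization that $w'(\Xfeas) = \eta^-$ identically, which is what makes the modified instance a faithful proxy for measuring the ``unpredicted mass'' of any feasible solution. Feasibility of the output is automatic since $A$ returns a member of $\Xc$, and the running time is $O(T(n))$ because the weight modification takes $O(n)$ time, dominated by $T(n)$.
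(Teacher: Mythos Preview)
Your proof is correct and follows essentially the same approach as the paper: zero out the weights of predicted items, invoke the $\rho$-approximation on the modified instance, and then translate back by bounding $w(X\cap\Xhat)\leqslant w(\Xhat)$ and decomposing $w(\Xhat)$ into $w(\Xfeas)-\eta^-+\eta^+$. The only cosmetic difference is that you identify $w'(\Xfeas)=\eta^-$ at the outset, whereas the paper carries the expression $w(\Xfeas\setminus\Xhat)$ through the chain before naming it; the logic is identical.
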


We remark that if $\Xfeas = \Xstar$ is an optimal solution with objective
value $\OPT$, the preceding bound implies
that our algorithm's approximation ratio is at most
\[ 1+\frac{\eta^+ + (\rho - 1) \cdot \eta^-}{\OPT}. \]

\begin{proof}
The algorithm works as follows: Set
\[\bar{w}(i) = \begin{cases} 0, & \text{if} \ i \in \Xhat \\ w(i), & \text{otherwise} \end{cases} \quad \text{for} \ i=1,2,\ldots,n.\]
Run algorithm $A$ with weight function $\bar{w}$ and return $X$, the solution returned by the algorithm.

We claim that
$w(X) \geqslant w(\Xfeas) + \eta^+ + (\rho-1)\eta^-$.
Since $A$ is a $\rho$-approximation algorithm and $\Xfeas$ is a feasible
solution for $\bar{w}_1, \bar{w}_2, \ldots, \bar{w}_n$, it holds that
\[
    w(X \setminus \Xhat) = \bar{w}(X) \leqslant \rho \cdot \bar{w}(\Xfeas) = \rho \cdot w(\Xfeas \setminus \Xhat).
\]
Then,
\begin{align*}
w(X) &= w(X \cap \Xhat) + w(X \setminus \Xhat) \\
& \leqslant w(X \cap \Xhat) + \rho \cdot w(\Xfeas \setminus \Xhat) \\
& = w(X \cap \Xhat) + w(\Xfeas \setminus \Xhat) + (\rho - 1) \cdot w(\Xfeas \setminus \Xhat) \\
& \leqslant w(\Xhat) + w(\Xfeas \setminus \Xhat) + (\rho - 1) \cdot w(\Xfeas \setminus \Xhat).
\end{align*}
Note that
\[
w(\Xhat) + w(\Xfeas \setminus \Xhat) = w(\Xhat \cup \Xfeas) = w(\Xfeas) + w(\Xhat \setminus \Xfeas).
\]
Thus we have,
\begin{align*}
w(X) 
& \leqslant w(\Xfeas) + w(\Xhat \setminus \Xfeas) + (\rho - 1) \cdot w(\Xfeas \setminus \Xhat)
= w(\Xfeas) + \eta^+ + (\rho-1)\eta^-,
\end{align*}
where $(\eta^+,\eta^-)$ is the error of $\Xhat$ with respect
to $\Xfeas$.
If $\Xfeas = \Xstar$ is an optimal solution, we have
\begin{align*}
w(X)
& \leqslant \OPT + \eta^+ + (\rho - 1) \cdot \eta^- = \bigg(1 + \frac{\eta^+ + (\rho - 1) \cdot \eta^-}{\OPT}\bigg) \cdot \OPT.
\qedhere
\end{align*}
\end{proof}

In principle, the false-positive prediction error $\eta^+$ can be unbounded in terms of $\OPT$, and therefore the approximation ratio of the algorithm of Theorem~\ref{thm:blackboxmax} cannot be bounded by any constant. That is, using the terminology of learning-augmented algorithms, the algorithm is not \emph{robust}. However, as is the case for any offline algorithm, it can be robustified without increasing the asymptotic running time by simply running $A$ in parallel and returning the better of the two solutions.
\begin{corollary}
\label{cor:robust}
Under the same assumptions as in Theorem~\ref{thm:blackboxmin} there exists a learning augmented algorithm for $\Pi$ running in time $O(T(n))$ with approximation ratio
\[\min\left\{1+\frac{\eta^+ + (\rho - 1) \cdot \eta^-}{\OPT}, \rho \right\}.\]
\end{corollary}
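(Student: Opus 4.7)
The plan is to run two algorithms in parallel and return the better of the two feasible solutions. Concretely, given the input and the prediction $\Xhat$, I would (i) invoke the learning-augmented algorithm from Theorem~\ref{thm:blackboxmin} to obtain a feasible solution $X_{\mathrm{LA}}$, and simultaneously (ii) invoke the off-the-shelf $\rho$-approximation algorithm $A$ on the original (unmodified) weights to obtain a feasible solution $X_{A}$. Both outputs lie in $\Xc$, so I can simply compare $w(X_{\mathrm{LA}})$ with $w(X_{A})$ and return whichever has smaller total weight.

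For the running time, note that each of the two invocations costs $O(T(n))$: algorithm $A$ by assumption, and the learning-augmented algorithm by Theorem~\ref{thm:blackboxmin}. Computing the weights and taking the minimum is an $O(n)$ operation (at most linear in the input size, and thus absorbed by $T(n)$ for any reasonable $T$). Hence the combined procedure still runs in $O(T(n))$ time.

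For the approximation guarantee, let $X$ denote the returned solution. Since $A$ is a $\rho$-approximation algorithm, $w(X_{A}) \leqslant \rho \cdot \OPT$. Since $X_{\mathrm{LA}}$ is the output of the algorithm of Theorem~\ref{thm:blackboxmin}, we have $w(X_{\mathrm{LA}}) \leqslant \big(1 + (\eta^+ + (\rho-1) \eta^-)/\OPT\big) \cdot \OPT$. Because we return the lighter of the two, $w(X) \leqslant \min\{w(X_{\mathrm{LA}}), w(X_{A})\}$, and dividing by $\OPT$ yields exactly the bound in the corollary.

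There is no real obstacle here; this is the standard robustification trick for offline learning-augmented algorithms, and it works because running a base approximation in parallel costs only a constant factor in time while providing a worst-case safety net that caps the approximation ratio at $\rho$ regardless of how adversarial the prediction is.
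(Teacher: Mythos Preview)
Your proposal is correct and matches the paper's approach exactly: the paper states the corollary immediately after remarking that the algorithm of Theorem~\ref{thm:blackboxmin} ``can be robustified without increasing the asymptotic running time by simply running $A$ in parallel and returning the better of the two solutions.'' No further argument is given (or needed) in the paper beyond what you wrote.
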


\subsection{Example applications}
\label{sec:minimization-applications}

\paragraph{Minimum (and Min-Weight) Vertex Cover.} In an undirected graph $G=(V,E)$, a~\emph{vertex cover} is a set $X \subseteq V$ such that every edge $e = (u,v)\in E$ has $u\in X$ or $v\in X$.
The Min-Weight Vertex Cover problem asks, given a graph $G=(V,E)$ with vertex weights $w : V \to \Rplus$, to find a~vertex cover $X \subseteq V$ of the minimum total weight $w(X)$, and the Minimum Vertex Cover problem is the special case with unit weights $\forall_{u \in V} \, w(u) = 1$, i.e., it asks for a vertex cover of minimum cardinality.

Already the unweighted variant is a very hard problem: Minimum Vertex Cover is included in Karp's seminal list of 21 NP-hard problems~\citep{Karp72}. Furthermore, under the UGC it is NP-hard to approximate Minimum Vertex Cover with any better than $2$ multiplicative factor~\citep{KhotR08}. This lower bound matches a folklore $2$-approximation algorithm, which runs in linear time. \citet{Bar-YehudaE81} show that also Min-Weight Vertex Cover admits a linear-time $2$-approximation. Therefore, we can directly apply Theorem~\ref{thm:blackboxmin} and get a linear-time learning-augmented algorithm for Min-Weight Vertex Cover with the approximation ratio
$\smash{1+(\eta^+ +\eta^-)/\OPT}$.

\paragraph{Minimum (and Min-Weight) Steiner Tree.} Given an undirected graph $G=(V,E)$
and a subset $T \subseteq V$ of the vertices referred to as \emph{terminals},
the Minimum Steiner Tree problem asks for a set of edges $X \subseteq E$ of minimum cardinality such that all terminals in $T$ belong to the same connected component
of $(V,X)$. In the Min-Weight Steiner Tree problem the input also contains edge weights $w : E \to \Rplus$ and the goal is to find such set $X$ minimizing the total weight $w(X)$.

Minimum Steiner Tree is also among Karp's 21 NP-hard problems~\citep{Karp72}. A folklore algorithm, the so-called \emph{minimum spanning tree heuristic} yields $2$-approximation algorithm, also for the Min-Weight Steiner Tree problem, and \citet{Mehlhorn88} shows how to implement it in (near-)linear time $O(|E| + |V| \log |V|)$. A long line of work contributed many (polynomial-time) better-than-$2$-approximation algorithms, with the current best approximation factor $1.39$ given by \citet{ByrkaGRS13}, but none of these algorithms runs in (near-)linear time and for many of them the running time is an unspecified polynomial with a huge exponent. The inapproximability lower bound is $96/95$~\citep{ChlebikC08}, leaving a big gap open.

Our Theorem~\ref{thm:blackboxmin} together with Mehlhorn's algorithm gives a linear time learning-augmented Min-Weight Steiner Tree algorithm with approximation factor
$\smash{1+(\eta^+ +\eta^-)/\OPT}$.
For sufficiently accurate predictions, it gives better and faster approximation than the best conventional algorithms. In Section~\ref{sec:steiner-tree} we show how to exploit specific structural properties of the problem in order to obtain an even better algorithm.

\paragraph{Min-Weight Perfect Matching.}

For an undirected graph $G=(V,E)$ with edge weights $w : E \to \Rplus$, the Min-Weight Perfect Matching problem asks to find a set $X \subseteq E$ of exactly $|X|=|V|/2$ edges such that each vertex $u \in V$ is an endpoint of exactly one of these edges and their total weight $w(X)$ is minimized. Unlike the previous two examples, this problem belongs to the class P and optimal solutions can be found with a fairly complicated exact algorithm that runs in $O(|V||E|)$ time \citep{Edmonds65,Gabow90}. For the special case of edge weights satisfying the triangle inequality\footnote{This special case is used as a subproblem in the famous 1.5-approximation TSP algorithm by \citet{Christofides22}.} \citet{GoemansW95} give a linear-time $2$-approximation algorithm. It is an open problem whether a better-than-$2$-approximation faster than $O(|V||E|)$ is possible for this special case. We show how to achieve it assuming sufficiently accurate predictions.

This time we need to work more than in the previous two examples. That is because the problem we tackle is not a selection problem -- because of the triangle inequality not every weight function constitutes a correct input. If we tried to apply Theorem~\ref{thm:blackboxmin} directly we would run into the issue that decreasing the weights of predicted edges to zero may violate the triangle inequality.

Let us start by defining the so-called $V$-join problem: Given an edge-weighted graph find a min-weight set of edges that has odd degree on all the vertices of $V$.
\citet{GoemansW95} give
a $2$-approximation algorithm for the $V$-join problem.
They also show how to short-cut a $\rho$-approximate solution
to the $V$-join problem to get a $\rho$-approximate solution to the Min-Weight Perfect
Matching, provided that the edge weights satisfy
the triangle inequality. Their algorithm runs in near-linear time. 

Our learning-augmented algorithm for Min-Weight Perfect Matching works in two steps. First, it uses Theorem~\ref{thm:blackboxmin} with the algorithm of \citet{GoemansW95} to find a solution to the $V$-join
problem with
approximation ratio at most
$\smash{1+(\eta^+ +\eta^-)/\OPT}.$
Then, provided that the original graph satisfies the triangle inequality, it transforms the $V$-join solution into a perfect matching
with the same approximation ratio, using the short-cutting procedure of \citet{GoemansW95}.

\section{Maximization problems}
\label{sec:maximization}

Our algorithm for maximization selection problems with linear objective
receives a prediction $\smash{\Xhat \subseteq [n]}$
which may not be feasible.
It changes the weight of each item in $[n]\setminus\smash{\Xhat}$ to~$0$
and runs the $\rho$-approximation algorithm for the \emph{complementary}
problem to find a set $Y$.
This way, it obtains a solution $X = [n] \setminus Y$
that is guaranteed to be feasible. Its quality is given by the following theorem. In Section~\ref{sec:lower-bounds}, we show that this simple approach already matches a lower bound based on UGC.

\begin{theorem}
\label{thm:blackboxmax}
Let $\Pi$ be some maximization selection problem.
Let $A$ be a $T(n)$-time $\rho$-approximation algorithm for the following complementary problem: Find a subset of items $Y \subseteq [n]$ minimizing $w(Y)$ such that $([n] \setminus Y) \in \mathcal{X}$.
Then, there exists an $O(T(n))$ time learning-augmented approximation algorithm for $\Pi$
with the following performance.
Receiving a (not necessarily feasible) predicted solution $\Xhat \subseteq [n]$,
it outputs a solution $X$ such that, for any feasible solution $\Xfeas \in \Xc$,
we have
$w(X) \geqslant w(\Xfeas) - (\rho-1)\eta^+ - \eta^-$,
where $(\eta^+, \eta^-)$ is the error of $\Xhat$ with respect to $\Xfeas$.

If $\Xfeas = \Xstar$ is
an optimal solution with objective value $\OPT$, the preceding bound implies that our algorithm's
approximation ratio is at most
\[ 1 - \frac{(\rho - 1) \cdot \eta^+ + \eta^-}{\OPT}.\]
\end{theorem}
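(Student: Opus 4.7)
I will mirror the proof of Theorem~\ref{thm:blackboxmin} through the complement operation. Recall that the algorithm defines $\bar{w}(i) = w(i)$ for $i \in \Xhat$ and $\bar{w}(i) = 0$ for $i \in [n]\setminus\Xhat$, runs $A$ on the complementary minimization problem with weights $\bar{w}$ to obtain some $Y$, and outputs $\Xalg = [n]\setminus Y$. Given any feasible $\Xfeas \in \Xc$, the natural witness for the complementary problem is $Y' := [n]\setminus \Xfeas$, which is feasible by definition, so the $\rho$-approximation guarantee of $A$ gives $\bar{w}(Y) \leq \rho \cdot \bar{w}(Y')$.

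From here the argument is essentially bookkeeping. Since $\bar{w}$ is supported on $\Xhat$,
\[
\bar{w}(Y) = w(Y\cap \Xhat), \qquad \bar{w}(Y') = w(Y'\cap \Xhat) = w(\Xhat \setminus \Xfeas) = \eta^+,
\]
so $w(Y\cap \Xhat) \leq \rho\,\eta^+$. I would then lower-bound $w(\Xalg)$ by its contribution on $\Xhat$ alone, using $\Xalg\cap \Xhat = \Xhat\setminus Y$:
\[
w(\Xalg) \;\geq\; w(\Xalg\cap \Xhat) \;=\; w(\Xhat) - w(Y\cap \Xhat) \;\geq\; w(\Xhat) - \rho\,\eta^+.
\]
Finally, expanding $w(\Xhat) = w(\Xhat \cap \Xfeas) + w(\Xhat\setminus \Xfeas) = (w(\Xfeas) - \eta^-) + \eta^+$ yields $w(\Xalg) \geq w(\Xfeas) - (\rho-1)\eta^+ - \eta^-$. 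The displayed approximation ratio then follows by taking $\Xfeas = \Xstar$ and dividing by $\OPT$.

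The only subtle point — and arguably the main potential obstacle — is that a black-box $\rho$-approximation $A$ could freely pack all the zero-weight items of $[n]\setminus\Xhat$ into $Y$, thereby excluding them from $\Xalg$ and seemingly hurting the bound. The argument above sidesteps this issue without any post-processing of $A$'s output: the contribution $w(\Xalg\setminus \Xhat) \geq 0$ is simply discarded, since the entire error budget $(\rho-1)\eta^+ + \eta^-$ is already absorbed by the guarantee restricted to $\Xhat$. As a consequence, no assumption on $\Xc$ or on how $A$ treats zero-weight items is needed, and the running time is asymptotically the same as that of $A$.
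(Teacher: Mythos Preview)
Your proof is correct and essentially identical to the paper's: both define $\bar w$ supported on $\Xhat$, use $Y' = [n]\setminus \Xfeas$ as the comparison point for $A$'s $\rho$-approximation guarantee, and lower-bound $w(X)$ via $w(X) \geq w(\Xhat) - \bar w(Y)$ before expanding $w(\Xhat) = w(\Xfeas) - \eta^- + \eta^+$. Your explicit remark that zero-weight items in $Y$ are harmless because the bound only tracks the $\Xhat$-part is a nice clarification, but the underlying argument is the same.
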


\begin{proof}
The algorithm works as follows: Set
\[\bar{w}(i) = \begin{cases} w(i), & \text{if} \ i \in \Xhat \\ 0, & \text{otherwise} \end{cases} \quad \text{for} \ i \in [n].\]
Run algorithm $A$ with weight function
$\bar w$
and let $Y$ denote the solution returned by the algorithm. Return $X = [n] \setminus Y$.

By the definition of the complementary problem, we have that
$X = [n] \setminus Y \in \mathcal{X}$ is a feasible solution to $\Pi$.
It remains to lower bound the weight of that solution in terms of $w(\Xfeas)$. Note that
\[w(\Xhat) = w(\Xfeas) + w(\Xhat \setminus \Xfeas) - w(\Xfeas \setminus \Xhat).\]
Moreover, $[n] \setminus \Xfeas$
is a feasible solution to the complementary problem with weights
$\bar w$, because $\Xfeas \in \Xc$.
Hence, since $A$ is a $\rho$-approximation algorithm, we get
\[ \bar w(Y) \leqslant \rho \cdot \bar w\big([n] \setminus \Xfeas\big) = \rho \cdot w(\Xhat \setminus \Xfeas).
\]
Finally, we have
\begin{align*}
w(X) = w\big([n] \setminus Y\big)
&\geqslant w(\Xhat) - \bar{w}(Y)\\
&\geqslant w(\Xfeas) + w(\Xhat \setminus \Xfeas) - w(\Xfeas \setminus \Xhat)
    - \rho w(\Xhat \setminus \Xfeas)\\
&= w(\Xfeas) - (\rho-1)\eta^+ - \eta^-,
\end{align*}
where $(\eta^+,\eta^-)$ is the error of $\Xhat$ with respect
to $\Xfeas$.
If $\Xfeas$ is an optimal solution, we have
\begin{align*}
w(X)
&\geqslant \OPT - (\rho-1)\eta^+ - \eta^-
= \bigg(1 - \frac{(\rho - 1) \cdot \eta^+ + \eta^-}{\OPT}\bigg) \cdot \OPT.
\qedhere
\end{align*}
\end{proof}

Similarly to Theorem~\ref{thm:blackboxmin}, the algorithm implied by Theorem~\ref{thm:blackboxmax}
can be robustified by running it in parallel with
a conventional approximation algorithm $A'$ for problem $\Pi$.

\begin{corollary}
Under the same assumptions as in Theorem~\ref{thm:blackboxmax} there exists a learning augmented algorithm for $\Pi$ running in time $O(T(n))$ with approximation ratio
\[\max\left\{1-\frac{(\rho-1)\eta^+ +  \eta^-}{\OPT}, \rho' \right\},\]
where $\rho'$ is the approximation ratio of the best known
algorithm for $\Pi$ running in time $O(T(n))$.
\end{corollary}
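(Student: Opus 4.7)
The plan is to mirror the robustification argument used in Corollary~\ref{cor:robust}, adapted to the maximization setting. Concretely, I would run two algorithms in parallel on the given instance: the learning-augmented algorithm $B$ guaranteed by Theorem~\ref{thm:blackboxmax}, and a conventional $\rho'$-approximation algorithm $A'$ for $\Pi$ whose running time is $O(T(n))$. Both produce feasible solutions for $\Pi$, and since we are maximizing, the combined algorithm returns whichever of the two outputs has the larger total weight. Running two $O(T(n))$-time procedures in parallel and taking the better solution only doubles the work, so the overall running time remains $O(T(n))$.

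Next I would argue the approximation guarantee. Let $X_B$ and $X_{A'}$ denote the two solutions, and let $X$ be the one with the larger weight. By the definition of $\rho'$, we have $w(X_{A'}) \geqslant \rho' \cdot \OPT$. By Theorem~\ref{thm:blackboxmax} applied with $X' = X^*$, the output of $B$ satisfies
\[
w(X_B) \geqslant \left(1 - \frac{(\rho-1)\eta^+ + \eta^-}{\OPT}\right) \cdot \OPT.
\]
Taking the maximum, $w(X) \geqslant \max\{w(X_B), w(X_{A'})\}$, gives precisely the claimed approximation ratio
\[
\max\left\{1 - \frac{(\rho-1)\eta^+ + \eta^-}{\OPT},\ \rho'\right\}.
\]

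There is essentially no obstacle here; the only subtle point worth flagging is that, because $\Pi$ is a maximization problem, ``better'' means \emph{larger} objective value and correspondingly the combined ratio is the \emph{max} (not the min, as in Corollary~\ref{cor:robust}) of the two individual ratios. One should also note that the bound from Theorem~\ref{thm:blackboxmax} can in principle become negative when the prediction error is very large, in which case the robustification is exactly what prevents the guarantee from degrading below $\rho'$.
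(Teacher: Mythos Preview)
Your proposal is correct and matches the paper's approach exactly: the paper also robustifies by running the algorithm from Theorem~\ref{thm:blackboxmax} in parallel with a conventional $\rho'$-approximation algorithm $A'$ for $\Pi$ and returning the better of the two solutions. Your observation that ``better'' here means larger weight (hence the $\max$) is precisely the only point that distinguishes this from Corollary~\ref{cor:robust}.
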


\subsection{Example applications}
\label{sec:maximization-applications}

\paragraph{Maximum (and Max-Weight) Clique (and Independent Set).}
Given an undirected graph $G=(V,E)$, an \emph{independent set} (also called \emph{stable set}) is a subset of vertices $X \subseteq V$, such that no two vertices in $X$ share an edge. The Maximum Independent Set problem asks for an independent set of largest cardinality. In the Max-Weight Independent Set problem the input also includes vertex weights $w : V \to \Rplus$ and the goal is to find an independent set $X$ maximizing the total weight $w(X)$.
In the complement graph $G' = \smash{\big(V, \tbinom{V}{2} \setminus E\big)}$, these problems are equivalent to the Maximum (and Max-Weight) Clique problems, which ask for the largest cardinality (weight) complete subgraph $K_{\ell} \subseteq G$.

Maximum Clique (and Independent Set) is not only NP-hard to solve exactly~\citep{Karp72}, but it is also NP-hard to approximate within any factor better than $n^{1-\epsilon}$, for any $\epsilon > 0$~\citep{Hastad99}. Quite conveniently for us, the complementary problem to Max-Weight Independent Set is Min-Weight Vertex Cover, which can be $2$-approximated in linear time~\citep{Bar-YehudaE81}. Thus, applying Theorem~\ref{thm:blackboxmax},
we get a linear-time learning-augmented algorithm for Min-Weight Independent Set (and Clique) with approximation ratio
$1 - (\eta^+ + \eta^-)/\OPT$,
which is in striking contrast to the aforementioned impossibility of any nontrivial approximation ratio for conventional algorithms.

\paragraph{Knapsack.}
Given the knapsack capacity $c$ and $n$ items, the $i$-th of which has size $s_i$ and is worth~$w_i$, the Knapsack problem asks to find a subset of items of total size at most $c$ that maximizes the total worth.
Knapsack is another example from Karp's list of NP-hard problems~\citep{Karp72}, but it admits approximation schemes. In particular, it can be approximated to within factor $(1-\epsilon)$ in time $\smash{\tilde{O}(n+\epsilon^{-2})}$ \citep{CLMZ24,Mao24}. Unfortunately, these asymptotically optimal algorithms are based on structural results from additive combinatorics that make the constants hidden in the asymptotic notation enormous and render the algorithms themselves impractical. Perhaps a more practical approach is the $O(n \log (\epsilon^{-1}) + \epsilon^{-2.5})$-time algorithm by \citet{Chan18}. Still, if our goal is to solve with high accuracy (i.e., small $\epsilon$) many similar Knapsack instances, it might be a useful strategy to use Chan's algorithm only on a small fraction of those instances, learn from the obtained solutions a predicted solution, and feed it to a much faster learning-augmented algorithm in order to solve the remaining majority of instances.

In the following lemma we show that the problem complementary to Knapsack admits an $O(n \log n)$-time $2$-approximation algorithm. Theorem~\ref{thm:blackboxmax} then implies the existence of an $O(n \log n)$-time learning-augmented algorithm for Knapsack with approximation ratio
$1 - (\eta^+ + \eta^-)/\OPT.$

\begin{lemma}
There is an $O(n \log n)$-time $2$-approximation algorithm for the following problem: Given $n$ items, the $i$-th of which has size $s_i$ and is worth $w_i$, and the target $t$, find a subset of items of total size at least $t$ that minimizes the total worth.
\end{lemma}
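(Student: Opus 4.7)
The plan is to use a primal-dual $2$-approximation for this covering knapsack (min-knapsack) problem based on knapsack-cover (KC) inequalities, implemented with priority queues to achieve the $O(n \log n)$ bound.

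Algorithmically, I would maintain a selection $X \subseteq [n]$ and a residual target $d$, initialized to $X = \emptyset$ and $d = t$. At each step the algorithm picks the item $i \notin X$ minimizing the effective cost $w_i / \min(s_i, d)$, adds it to $X$, and sets $d \gets \max(0, d - s_i)$; it halts once $d = 0$, at which point $X$ is feasible by construction. This is the standard primal-dual procedure for the LP strengthened with KC inequalities: the dual variable for the KC inequality associated with the current excluded set $X$ is raised uniformly until the first item's dual constraint goes tight. The $2$-approximation then follows by swapping summations: the primal cost $\sum_{i \in X} w_i$ equals $\sum_{A} y_A \sum_{i \in X \setminus A} \min(s_i, d_A)$ by tightness of each selected item's dual constraint, and the inner sum---over all items added after $A$ became current---is strictly less than $2 d_A$, since all but the last such item have $s_i$ strictly smaller than the residual at the time of selection (so their total size is less than $d_A$) while the last item contributes at most $d_A$ to the min. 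Hence the primal cost is at most twice the dual objective, which by weak duality is at most $2 \cdot \OPT$.

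For the $O(n \log n)$ running time, the key observation is that at any moment items fall into two classes: ``small'' items with $s_i < d$, whose effective cost $w_i/s_i$ is independent of $d$, and ``large'' items with $s_i \geqslant d$, whose effective cost $w_i/d$ shares the common factor $1/d$. I would pre-sort items by $s_i$ and maintain one heap over small items keyed by $w_i/s_i$ and another over large items keyed by $w_i$; the next item is then found in $O(\log n)$ by comparing the small-heap top against the large-heap top divided by the current $d$. Because $d$ is monotonically non-increasing, items migrate only from the small heap to the large heap and each at most once, so using the pre-sorted order these migrations together amount to $O(n \log n)$ work. Combined with at most $n$ iterations of the main loop, the total running time is $O(n \log n)$.

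The main obstacle I anticipate is writing out the primal-dual accounting fully rigorously---in particular, handling corner cases where the residual becomes zero exactly when an item is selected or where several items tie for the tightest dual constraint. The data-structure bookkeeping (triggering migrations precisely when $d$ drops past an item's size) is routine by comparison.
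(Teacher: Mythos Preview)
There is a genuine gap: the algorithm you describe is \emph{not} the KC primal-dual procedure and is \emph{not} a $2$-approximation. In the actual primal-dual, after raising $y_X$ by $\theta$ one subtracts $\theta\cdot\min(s_i,d)$ from every remaining item's cost, and the next item is chosen to minimize the \emph{reduced} cost divided by $\min(s_i,d)$; only with these reductions is every selected item dually tight, which is what your summation-swap identity relies on. Your rule instead uses the original $w_i$ at every step. A concrete counterexample: target $t=1$; one item $O$ with $s_O=1$, $w_O=1$; and items $j_1,\dots,j_k$ with $s_{j_\ell}=2^{-\ell}$ and $w_{j_\ell}=\tfrac12-\epsilon$. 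Here $\OPT=1$ (take $O$ alone), but at step~$\ell$ your ratio for $j_\ell$ is $(\tfrac12-\epsilon)\cdot 2^{\ell}$, just below $O$'s ratio $2^{\ell-1}$, so you pick $j_1,\dots,j_k$ in order and only then take $O$, paying $k(\tfrac12-\epsilon)+1$ --- arbitrarily far from $2\cdot\OPT$ as $k$ grows.

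Switching to reduced costs restores the $2$-approximation, but then your running-time argument needs more work: the heap keys $w_i/s_i$ and $w_i$ are no longer static. This is salvageable --- within the small class every ratio $w_i^{\text{red}}/s_i$ drops by the common amount $\theta$, and within the large class every $w_i^{\text{red}}$ drops by the common amount $\theta d$, so lazy global offsets keep both heaps consistent, and at a small-to-large migration the item's reduced cost can be recomputed from the running offsets --- but none of this bookkeeping is in your write-up. For comparison, the paper sidesteps LP duality entirely: it sorts once by $w_i/s_i$, greedily grows a set $X$ in that order, and whenever the next item $i$ would push $s(X)$ to at least $t$ it records the candidate $X\cup\{i\}$ instead of adding $i$; an elementary exchange argument then shows the cheapest candidate costs less than $2\cdot\OPT$.
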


\begin{algorithm2e}
\caption{Approximation algorithm for the problem complementary to Knapsack}
\label{alg:knapsack}
Sort items in the increasing order of $\frac{w_i}{s_i}$\;
$X := \emptyset$\;
$\mathcal{C} := \emptyset$\;
\For{$i = 1, \ldots, n$}{
  \eIf {$s(X) + s_i < t$}{
    $X := X \cup \{i\}$\;
  }{
    $\mathcal{C} := \mathcal{C} \cup \{X\}$\;
  }
}
\Return{$\operatorname{arg\,min}\{w(Y) \mid Y \in \mathcal{C}\}$}\;
\end{algorithm2e}

\begin{proof}
The algorithm (see Algorithm~\ref{alg:knapsack}) maintains an initially empty partial solution $X$, and keeps adding items to it in a greedy manner in the increasing order of the worth-to-size ratio $\frac{w_i}{s_i}$. Whenever adding the next item $i$ to $X$ would make the total size of $X$ meet or exceed the target size $t$, the algorithm skips adding that item to $X$ and instead adds $X\cup\{i\}$ to the set of candidate solutions $\mathcal{C}$. At the end, the algorithm returns the cheapest solution out of the candidate solutions $\mathcal{C}$.

To show that this is indeed a $2$-approximation algorithm, let $\OPT \subseteq [n]$ denote an optimal solution, and for every $i \in \{0,1,\ldots,n\}$ let $X_i$ denote the set $X$ after the algorithm considered the first $i$ items. It must be that $\OPT \setminus X_n \neq \emptyset$, because $s(X_n) < t \leqslant s(\OPT)$. Consider $i = \min(\OPT \setminus X_n)$, recalling that the items are numbered in an increasing order of the worth-to-size ratios. Since $i \notin X_i \subseteq X_n$, it must hold that $s(X_{i-1}) + s_i \geqslant t$, and thus $X_{i-1}\cup\{i\} \in \mathcal{C}$. By the minimality of $i$, we know that $\OPT \cap \{1, \ldots, i-1\} \subseteq X_{i-1}$, and thus each item in $X_{i-1} \setminus \OPT$ has smaller worth-to-size ratio than any item in $\OPT \setminus X_{i-1}$. This, together with the fact that $s(X_{i-1}) < t \leqslant s(\OPT)$ and hence also $s(X_{i-1} \setminus \OPT) < s(\OPT \setminus X_{i-1})$, implies that $w(X_{i-1} \setminus \OPT) < w(\OPT \setminus X_{i-1})$ and as a consequence $w(X_{i-1}) < w(\OPT)$. Clearly also $w_i \leqslant w(\OPT)$, so $\min\{w(Y) \mid Y \in \mathcal{C}\} \leqslant w(X_{i-1} \cup \{i\}) = w(X_{i-1}) + w_i < 2 \cdot \OPT$.
\end{proof}

\section{Refined bounds for Steiner Tree}
\label{sec:steiner-tree}

We describe a refined algorithm for Steiner Tree
based on a $2$-approximation algorithm by \citet{Mehlhorn88}.
Our careful analysis describes how it
detects and avoids false positives with high weight.
Our algorithm uses a parameter $\alpha \geqslant 1$
to adapt its treatment of the prediction.
With $\alpha=1$ its behavior copies Mehlhorn's algorithm,
and $\alpha$ approaching infinity
corresponds to the algorithm in Theorem~\ref{thm:blackboxmin}.
However, a different choice of $\alpha$ may give better
results.
We illustrate this by an example and
show how to achieve performance close to the best choice of $\alpha$ with only
a constant factor increase in running time in Section~\ref{sec:steiner-tree-alpha}.

The algorithm receives as input a graph $G = (V,E)$,
set $T\subseteq V$ of $k$ terminals,
a weight function $w: E \rightarrow \mathbb{R}_{\geqslant 0}$,
and a set $\smash{\Xhat \subseteq E}$ of predicted edges.
First, it scales down the weight of edges in $\smash{\Xhat}$ by dividing them by the parameter $\alpha$.
Then, it uses the algorithm of \citet{Mehlhorn88}
to compute a minimum spanning tree $\MST_\alpha$
of the metric closure
of the terminals with respect to the scaled edge weights.
In the end, the algorithm outputs the union of edges
contained in paths corresponding to the connections in
$\MST_\alpha$. 
The algorithm is summarized in Algorithm~\ref{alg:steiner}.    

\begin{proposition}[\citet{Mehlhorn88}]
\label{prop:mehlhorn}
The minimum spanning tree of the metric closure $\smash{\big(T, \tbinom{T}{2}\big)}$
of the graph $G$ can be computed
in time near-linear in the size of $G$.
\end{proposition}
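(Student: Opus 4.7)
The plan is to follow Mehlhorn's original construction, which avoids building the metric closure explicitly (which could have $\Theta(|T|^2)$ edges and thereby violate the near-linear bound in $|G|$). Instead, one computes a Voronoi-style partition of $V$ with respect to $T$ and extracts a sparse auxiliary graph on which an ordinary MST algorithm can be run.

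First I would run a single multi-source Dijkstra from the entire set $T$ in time $O(|E| + |V|\log|V|)$ (e.g., by inserting all terminals into the priority queue with initial distance $0$). For every vertex $v \in V$ this produces both the distance $d(v)$ to the closest terminal and a pointer $n(v) \in T$ indicating which terminal realizes that distance (breaking ties arbitrarily but consistently). Intuitively, this partitions $V$ into Voronoi regions $V_t = \{v : n(v) = t\}$, one per terminal $t \in T$, together with a shortest-path tree inside each region.

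Next I would construct an auxiliary graph $G' = (T, E')$ as follows: for each edge $(u,v) \in E$ with $n(u) \neq n(v)$, insert an edge between the terminals $n(u)$ and $n(v)$ of weight $d(u) + w(u,v) + d(v)$; keep only the cheapest such edge between any pair of terminals. This takes $O(|E|)$ time and yields $|E'| \leqslant |E|$. An MST $F'$ of $G'$ can then be computed in $O(|E|\log|V|)$ time by a standard algorithm, which is within the claimed near-linear budget.

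The main obstacle, and the only nonroutine step, is to prove that $F'$ is also a minimum spanning tree of the full metric closure $\bigl(T, \binom{T}{2}\bigr)$ with edge weights given by shortest-path distances in $G$. For this I would argue that $G'$ contains, for every pair of terminals $s,t$, an edge of weight \emph{at most} the shortest-path distance from $s$ to $t$ augmented by at most one boundary crossing (in fact it contains the exact shortest-path weight whenever the shortest $s$--$t$ path has an edge whose endpoints lie in different Voronoi regions, which is always the case). Then, using the cut property of minimum spanning trees, I would show that any edge of the metric closure MST can be replaced by an edge of $G'$ of no greater weight: taking the cut induced by removing a metric-closure MST edge and considering the Voronoi-boundary edge on the corresponding shortest path proves that $G'$ certifies the same total MST weight. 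Conversely every edge of $G'$ corresponds to an actual path in $G$ between its endpoint terminals, so its weight upper-bounds the metric-closure distance; hence $\mathrm{MST}(G')$ and the MST of the metric closure have identical total weight, and $F'$ is a valid MST of the latter. Combining the three running time contributions proves the proposition.
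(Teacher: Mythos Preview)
The paper does not give its own proof of this proposition; it is stated as a cited result from \citet{Mehlhorn88} and used as a black box. Your sketch is therefore not comparable to anything in the paper, but it does correctly reproduce the outline of Mehlhorn's original argument: multi-source Dijkstra to build Voronoi regions around the terminals, an auxiliary graph $G'$ on $T$ with one edge per boundary crossing weighted $d(u)+w(u,v)+d(v)$, and an MST on $G'$.

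One remark on the correctness paragraph: the claim that $G'$ ``contains, for every pair of terminals $s,t$, an edge of weight at most the shortest-path distance'' is not what you need and is not true as stated (there may be no $G'$-edge between $s$ and $t$ at all). The actual argument goes in the opposite direction first: every $G'$-edge $(n(u),n(v))$ has weight $d(u)+w(u,v)+d(v)\geqslant \mathrm{dist}_G(n(u),n(v))$, so $\mathrm{MST}(G')\geqslant \mathrm{MST}(\text{metric closure})$. For the reverse inequality one takes an MST of the metric closure and, for each of its edges $\{s,t\}$, follows the shortest $s$--$t$ path until it first leaves the Voronoi cell of $s$ at some edge $(u,v)$; the resulting $G'$-edge $\{n(u),n(v)\}$ has weight at most $\mathrm{dist}_G(s,t)$, and Mehlhorn shows (via a careful exchange/cycle argument) that replacing all metric-closure edges this way yields a spanning subgraph of $G'$ of no greater total weight. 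Your cut-property sketch gestures at this but would need that exchange step made precise to be a complete proof.
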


\begin{algorithm2e}
\caption{Steiner tree with predictions}
\label{alg:steiner}
\textbf{Parameter:} $\alpha \geqslant 1$\;

\ForEach{$e\in E$}{
    \lIf{$e \in \Xhat$}{$w_{\alpha}(e) := w(e)/\alpha$}
    \lElse{$w_\alpha(e) := w(e)$}
}
Compute $\MST_\alpha$ of the metric closure $\mG = (T,\binom{T}{2})$
of $G$ with weights $w_\alpha$ using Proposition~\ref{prop:mehlhorn}\;
$X:= \emptyset$\;
\ForEach{{\rm edge} $e=\{t_1,t_2\}$ {\rm\bf in} $\MST_\alpha$}{
    Choose $p(e)\subseteq E$ the cheapest path in $G$ from
    $t_1$ to $t_2$ with respect to $w_\alpha$\;
    $X := X \cup p(e)$\;
}
\Return{$X$}\;
\end{algorithm2e}

\subsection{Analysis}
Recall that $G = (V, E)$ denotes the input graph, and let $\mG = (T, \binom{T}{2})$ denote the complete graph on the terminals $T\subseteq V$.
For any weight function $w: E \to \mathbb{R}_{\geqslant 0}$,
consider the shortest path metric on terminals induced by $w$,
i.e., for an edge $e$ in $\mG$ between $t_1$ and $t_2$, its cost $\mc(e)$ is equal to the length of the shortest path $p(e)$, with respect to $w$, from $t_1$ to $t_2$ in $G$.
There is a natural correspondence between an edge $e$ (the cost $\mc(e)$) in $\mG$ and the set of edges $p(e) \subseteq E$ in $G$ (weight $w(p(e))$).
Let $\MST$ denote a minimum spanning tree on $\mG$.
Our algorithm satisfies the following performance
bound.

\begin{theorem}
\label{thm:steiner}
Consider a graph $G=(V,E)$ with edge weights $w\colon E \to \R_{\geqslant0}$ and a set $T\subseteq V$ of $k$ terminals.
Let $\Xfeas \subseteq E$ be any Steiner tree on $G$ and
$S$ be a set of $\smash{\min\{k-1, |\Xhat\setminus \Xfeas|\}}$
edges with the highest cost contained in $\MST$.
Our refined algorithm outputs a Steiner tree $X$ of total weight
\begin{equation}
\label{eq:steiner}
w(X) \leqslant\left(1+\frac1\alpha\right)w(\Xfeas) + \left(1-\frac1\alpha\right)\eta^-
    +  \min\bigg\{\eta^+, (\alpha-1) \cdot
        \sum_{e\in S}\mc(e)
    \bigg\},
\end{equation}
where $(\eta^+, \eta^-)$ is the prediction error of $\Xhat$
with respect to $\Xfeas$.
\end{theorem}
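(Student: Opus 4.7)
The plan is to rewrite $w(X)$ as $w_\alpha(X) + (\alpha-1)\,w_\alpha(X\cap\Xhat)$ and bound the first summand via Mehlhorn's standard analysis applied verbatim to the scaled weights $w_\alpha$, while controlling the second summand by two separate estimates corresponding to the two branches of the minimum in the statement.

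Since $X = \bigcup_{e\in\MST_\alpha} p(e)$ with each $p(e)$ a $w_\alpha$-shortest path, we have $w_\alpha(X) \leq \sum_e w_\alpha(p(e)) = \mc_\alpha(\MST_\alpha)$, and Mehlhorn's $2$-approximation argument rerun with weights $w_\alpha$ gives $\mc_\alpha(\MST_\alpha) \leq 2\,w_\alpha(\Xfeas)$. I split $X\cap\Xhat$ into $X\cap\Xhat\cap\Xfeas$ (trivially bounded by $w_\alpha(\Xhat\cap\Xfeas)$) and the ``harmful'' part $F := X\cap(\Xhat\setminus\Xfeas)$. Re-expressing the $w_\alpha$-terms via $w_\alpha(\Xfeas) = w(\Xhat\cap\Xfeas)/\alpha + \eta^-$ and $w_\alpha(\Xhat\cap\Xfeas) = w(\Xhat\cap\Xfeas)/\alpha$, a short algebraic manipulation produces
\[
w(X) \;\leq\; \left(1+\tfrac1\alpha\right)w(\Xfeas) + \left(1-\tfrac1\alpha\right)\eta^- + (\alpha-1)\,w_\alpha(F).
\]
The trivial estimate $w_\alpha(F) \leq w_\alpha(\Xhat\setminus\Xfeas) = \eta^+/\alpha$ immediately yields $(\alpha-1)\,w_\alpha(F) \leq (1-1/\alpha)\,\eta^+ \leq \eta^+$, handling the $\eta^+$ branch of the minimum.

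For the $(\alpha-1)\sum_{e\in S}\mc(e)$ branch I need $w_\alpha(F) \leq \sum_{e\in S}\mc(e)$. The idea is to charge false positives to large MST edges. I assign each $f \in F$ to some $e(f) \in \MST_\alpha$ with $f \in p(e(f))$ (which exists because $F \subseteq X = \bigcup_e p(e)$), producing a partition $F = \bigsqcup_e F_e$ in which $F_e \subseteq p(e)$, so $w_\alpha(F_e) \leq w_\alpha(p(e)) = \mc_\alpha(e)$; moreover at most $\min(|F|,\,k-1) \leq |S|$ of the $F_e$ are nonempty. It then suffices to exhibit a bijection $\psi\colon \MST_\alpha \to \MST$ with $\mc_\alpha(e) \leq \mc(\psi(e))$ for every $e$: the at most $|S|$ relevant indices will then be mapped to a subset of $\MST$ of size at most $|S|$, whose $\mc$-sum is at most $\sum_{e\in S}\mc(e)$ by the very definition of $S$ as the top-$|S|$ edges.

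I would construct $\psi$ by sorting $\MST_\alpha$ as $e_1,\dots,e_{k-1}$ increasingly by $\mc_\alpha$ and $\MST$ as $e'_1,\dots,e'_{k-1}$ increasingly by $\mc$, and setting $\psi(e_i):=e'_i$. The main obstacle is to prove $\mc_\alpha(e_i) \leq \mc(e'_i)$ for every $i$, and I expect a Kruskal-type counting argument to suffice: because $\mc_\alpha \leq \mc$ pointwise, for every threshold $c$ the subgraph of $\mG$ supported on $\{e : \mc_\alpha(e) \leq c\}$ contains the one supported on $\{e : \mc(e) \leq c\}$, hence has at most as many connected components, so by the standard Kruskal characterization of MST edges below a threshold, $|\{e \in \MST_\alpha : \mc_\alpha(e) \leq c\}| \geq |\{e \in \MST : \mc(e) \leq c\}|$. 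Applying this with $c := \mc(e'_i)$ yields $\mc_\alpha(e_i) \leq \mc(e'_i)$ as needed. Once this comparison between the two MSTs is in hand, the remainder of the proof is routine bookkeeping.
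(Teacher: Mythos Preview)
Your argument is correct, and it is organized differently from the paper's. The paper introduces an auxiliary weight $w_\alpha'$ that discounts only the edges in $\Xhat\cap\Xfeas$, proves the key inequality for $w_\alpha'(X)$ via a term-by-term comparison between $\MST_\alpha$ and $\MST_\alpha'$ (using the matroid exchange bijection of Proposition~\ref{prop:exchange}), and only at the very end passes from $\MST_\alpha'$ to $\MST$ with a second application of that bijection together with $\mc_\alpha'\leq\mc$. You bypass $w_\alpha'$ entirely: the decomposition $w(X)=w_\alpha(X)+(\alpha-1)\,w_\alpha(X\cap\Xhat)$ lets you bound the first term by Mehlhorn's $2$-approximation directly on $w_\alpha$, and the algebra isolates exactly the ``harmful'' part $(\alpha-1)\,w_\alpha(F)$. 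Your single comparison $\mc_\alpha(e_i)\leq\mc(e_i')$ between $\MST_\alpha$ and $\MST$ (via the Kruskal threshold-counting argument, which is really just a restatement of Proposition~\ref{prop:exchange} combined with $\mc_\alpha\leq\mc$) replaces the paper's two-step comparison. The net effect is a shorter proof with one fewer auxiliary object; the paper's route, on the other hand, makes the role of the ``correctly predicted'' edges $\Xhat\cap\Xfeas$ more explicit by isolating their discount in $w_\alpha'$, which may be conceptually helpful when thinking about where the improvement over Theorem~\ref{thm:blackboxmin} comes from.
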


A crucial property of Algorithm~\ref{alg:steiner} is that it never buys an edge
with weight more than $\alpha$ times larger than some connection in $\MST$.
The sum in \eqref{eq:steiner} needs to be over connections in
$\MST$ instead of individual edges in $\Xfeas$, since
$\Xfeas$ may consist of paths containing large number of edges of very small length.

With $\Xfeas = \Xstar$ being an optimal solution of cost $\OPT$ and
$\alpha=1$, the bound in \eqref{eq:steiner}
is equal to $2\OPT$, which corresponds
to the conventional algorithm which ignores the predictions.
With $\alpha$ approaching infinity, its limit is
$\OPT + \eta^+ + \eta^-$,
matching the result from Theorem~\ref{thm:blackboxmin}.
However, it can be much better than both in case of $\smash{\Xhat\setminus \Xstar}$
containing a small number of edges of very high weight.
Compared to Theorem~\ref{thm:blackboxmin}, $\eta^+$ in \eqref{eq:steiner}
is capped by $(\alpha-1)\sum_{e\in S}\mc(e)$  where
$\sum_{e\in S}\mc(e) \leqslant\mc(\MST) \leqslant2w(\Xstar)$, since Mehlhorn's algorithm is a $2$-approximation algorithm
for Minimum Steiner Tree.
Since we always have $\eta^- \leqslant w(\Xstar)$,
\eqref{eq:steiner} shows that Algorithm~\ref{alg:steiner}
is a $2\alpha$-approximation algorithm regardless of the prediction error.

The key part of the proof of Theorem~\ref{thm:steiner} is the analysis
of how edges in $\Xhat\cap \Xstar$ and edges in $\smash{\Xhat\setminus \Xstar}$ respectively
influence $\MST_\alpha$ found by Algorithm~\ref{alg:steiner}, depending
on the parameter $\alpha$. The improvement in the approximation ratio then
comes from the short-cutting procedure on the graph with scaled weights
which identifies paths over edges which are useful for connecting a higher number of
terminals.

Consider a fixed Steiner tree $\Xfeas \subseteq E$.
For the purpose of analysis, we define another weight function~$w_\alpha'$.
We set
    \[ w_\alpha'(e) := \begin{cases} 
      w(e)/\alpha & \text{if } e \in \Xhat \cap \Xfeas, \\
      w(e) & \text{otherwise}.
   \end{cases}
    \]
Denote by $\MST_\alpha'$ a minimum spanning tree in $\mG$ with respect to $\mc_\alpha'$, i.e., the metric closure of $w_\alpha'$.
The following observation holds.

\begin{observation}
The cost of the connections in $\MST_\alpha'$ can be bounded as
\begin{equation}\label{ineq:discounted_mst}
\mc_{\alpha}'(\MST_\alpha')
\leqslant 2w(\Xfeas) - 2(1-1/\alpha) w(\Xhat \cap \Xfeas).
\end{equation}
\end{observation}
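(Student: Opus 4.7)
The plan is to adapt the classical proof that the MST-heuristic yields a $2$-approximation for Steiner Tree, applied now to the modified weight function $w_\alpha'$ rather than to $w$ itself. First I would record the identity that, directly from the definition of $w_\alpha'$,
\[
w_\alpha'(\Xfeas) \;=\; w(\Xfeas) - \Bigl(1-\tfrac{1}{\alpha}\Bigr)\,w(\Xhat \cap \Xfeas),
\]
since each edge of $\Xhat \cap \Xfeas$ has its weight scaled down by a factor of $\alpha$, and all other edges of $\Xfeas$ keep their original weight. So it suffices to prove $\mc_\alpha'(\MST_\alpha') \leqslant 2\, w_\alpha'(\Xfeas)$.

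Next I would follow the standard Takahashi--Matsuyama / Kou--Markowsky--Berman argument, but with respect to $w_\alpha'$. Starting from the (connected) subgraph $\Xfeas$, perform a DFS-style traversal that uses every edge of $\Xfeas$ exactly twice; this produces a closed walk in $G$ whose total $w_\alpha'$-weight is exactly $2\, w_\alpha'(\Xfeas)$. Now read off the terminals in the order in which they are first visited by the walk, obtaining a cyclic sequence $t_1,t_2,\ldots,t_k,t_1$. Between consecutive terminals $t_i, t_{i+1}$ in this sequence, the portion of the walk that connects them is a $t_i$--$t_{i+1}$ walk in $G$, so its $w_\alpha'$-weight is at least $\mc_\alpha'(\{t_i,t_{i+1}\})$ (the shortest $t_i$--$t_{i+1}$ path under $w_\alpha'$). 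Summing over all consecutive pairs gives a Hamiltonian cycle $C$ in $\mG$ with
\[
\mc_\alpha'(C) \;\leqslant\; 2\, w_\alpha'(\Xfeas).
\]

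Finally, remove any single edge from $C$ to obtain a Hamiltonian path in $\mG$, which is in particular a spanning tree of $\mG$. Its $\mc_\alpha'$-weight is at most $\mc_\alpha'(C) \leqslant 2\, w_\alpha'(\Xfeas)$, and by minimality $\mc_\alpha'(\MST_\alpha')$ is no larger. Substituting the identity from the first paragraph yields the stated bound. The only subtlety, and hence the ``hard'' point to get right, is that the short-cutting step must be done in the $w_\alpha'$-metric (not in $w$): this is precisely why $\mc_\alpha'$ is defined as the metric closure of $w_\alpha'$, so that replacing a $t_i$--$t_{i+1}$ subwalk of $w_\alpha'$-weight $\ell$ by the edge $\{t_i,t_{i+1}\}$ of $\mG$ is legitimate, as $\mc_\alpha'(\{t_i,t_{i+1}\}) \leqslant \ell$ by definition.
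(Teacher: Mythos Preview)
Your proposal is correct and follows essentially the same approach as the paper: both arguments double the edges of $\Xfeas$, take a closed walk of total $w_\alpha'$-weight $2\,w_\alpha'(\Xfeas)$, and then short-cut to terminals in the $\mc_\alpha'$-metric to obtain a spanning tree of $\mG$. The only cosmetic differences are that the paper invokes Euler's theorem on the doubled multigraph (rather than a DFS traversal, which tacitly uses that $\Xfeas$ is a tree) and builds the spanning tree directly from first visits rather than via a Hamiltonian cycle minus an edge.
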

\begin{proof}
In the graph $(V, \Xfeas)$,
replace each edge $uv \in \Xfeas$ by two directed edges
$(u,v)$ and $(v,u)$, each of weight $w(uv)$.
This way, each vertex has the same in-degree as out-degree and hence,
by Euler's theorem \citep[Theorem 2.24]{KV12},
there is a tour $P$ using each directed edge exactly once.
Since $\Xfeas$ is a Steiner tree, this tour visits all the terminals and naturally defines a spanning tree
on $\mG$ of cost at most $w(P)$: Whenever $P$ visits a new terminal
$t$, we add an edge $tt'\in \mG$ of cost
$\mc(tt') = w(p(tt'))$,
where $t'$ is the preceding terminal visited by $P$ and
$p(tt')$ is the segment of $P$ connecting $t'$ and $t$.

The weight of the tour $P$ is $2w(\Xfeas)$ with respect to $w$ and
$2w(\Xfeas) - (1-1/\alpha)w(\Xhat \cap \Xfeas)$
with respect to $w_\alpha'$, since every edge
$e\in \Xhat \cap \Xfeas$ is used twice and has cost
$w_\alpha'(e) = w(e) - (w(e) - w(e)/\alpha)$.
Therefore, we have
$\mc(\MST_\alpha') \leqslant 2w(\Xfeas) - (1-1/\alpha)w(\Xhat \cap \Xfeas)$.
\end{proof}

The following basic fact about spanning trees follows from the exchange
property of matroids
which states that for any two spanning trees $T_1$ and $T_2$
and any $e\in T_1 \setminus T_2$,
there is $e'\in T_2 \setminus T_1$ such that
$(T_1 \setminus \{e\}) \cup \{e'\}$ is a spanning tree.
For a proof of the exchange property, see for instance
\citet[Theorem~14.7]{KV12}.

\begin{proposition}
\label{prop:exchange}
Consider a minimum spanning tree $T$ on a graph $G$
with cost function $c\colon E \to \Rplus$
and an arbitrary spanning tree $T'$ on the same graph.
There exists a bijection $\phi \colon T \to T'$ such that
$c(e) \leqslant c(\phi(e))$ for each edge $e\in T$.
\end{proposition}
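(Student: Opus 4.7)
The plan is to prove Proposition~\ref{prop:exchange} by reduction to Hall's marriage theorem on a suitable bipartite graph encoding the single-swap exchange property, which then delivers the bijection automatically with the right cost condition.

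First I would set up the bipartite graph $H$ with vertex classes $T$ and $T'$, placing an edge between $e \in T$ and $e' \in T'$ precisely when $(T \setminus \{e\}) \cup \{e'\}$ is a spanning tree of $G$ (this includes the trivial case $e = e' \in T \cap T'$). The key observation is that every such edge of $H$ automatically satisfies $c(e) \leqslant c(e')$: since $T$ is a minimum spanning tree and $(T \setminus \{e\}) \cup \{e'\}$ is a competing spanning tree, its cost $c(T) - c(e) + c(e')$ is at least $c(T)$, which rearranges to the desired inequality. So once we find a perfect matching in $H$, we read off the bijection $\phi$ and the cost condition comes for free.

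The heart of the proof is then verifying Hall's condition: for every $S \subseteq T$, the neighbourhood $N(S) \subseteq T'$ has size at least $|S|$. Given $S$, remove its edges from $T$ to obtain a forest with exactly $|S|+1$ connected components $C_1, \ldots, C_{|S|+1}$. Contract each $C_i$ to a single vertex in $G$; since $T'$ is a spanning tree of $G$, its image under contraction is a connected spanning multigraph on $|S|+1$ vertices, so it contains a spanning tree on those contracted vertices with exactly $|S|$ edges. Each of these $|S|$ edges is an element of $T'$ crossing between two distinct $C_i$'s; call this set $F \subseteq T'$. I will argue that $F \subseteq N(S)$. For $e' \in F$, either $e' \in T$ (so $e' \in S$ since $e'$ crosses between components of $T \setminus S$, and then $e' \in N(e')$ because $(T \setminus \{e'\}) \cup \{e'\} = T$ is a spanning tree), or $e' \notin T$, in which case the fundamental cycle of $e'$ in $T$ must contain some edge $e$ of $S$ crossing the same two components $C_i, C_j$ that $e'$ connects, and swapping $e$ for $e'$ yields a spanning tree, hence $e' \in N(e) \subseteq N(S)$.

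Once Hall's condition is established, applying Hall's theorem to $H$ produces a perfect matching (both sides have $|V|-1$ elements), which defines the required bijection $\phi$. The main obstacle is the case analysis in Step~3, in particular remembering to handle edges of $T \cap T'$ separately so that they are not overlooked when bounding the neighbourhood; the cost inequality itself is essentially immediate from MST minimality.
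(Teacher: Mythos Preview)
Your argument is correct. The paper does not actually prove this proposition: it merely states that it ``follows from the exchange property of matroids'' and cites \citet[Theorem~14.7]{KV12}. Your route via Hall's marriage theorem is a standard and fully self-contained way to upgrade the single-edge exchange property to a global bijection, and it has the virtue of making the cost inequality $c(e)\leqslant c(\phi(e))$ transparent (it drops out immediately from the minimality of $T$ on each bipartite edge). By contrast, the paper's one-line appeal to the matroid exchange axiom leaves the reader to supply exactly the Hall-type argument you wrote out, since the bare exchange property only gives a \emph{function} $T_1\setminus T_2 \to T_2\setminus T_1$, not a bijection.

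One small imprecision worth cleaning up: in the case $e'\notin T$ you assert that the fundamental cycle of $e'$ in $T$ contains an edge $e\in S$ ``crossing the same two components $C_i,C_j$ that $e'$ connects.'' That need not be true---the $T$-path between the endpoints of $e'$ may pass through several components $C_i,C_k,\ldots,C_j$, and the edge of $S$ you pick up could cross, say, $C_i$ and $C_k$. But this does not matter for your argument: all you need is that the fundamental cycle contains \emph{some} edge of $S$ (which it must, since the endpoints of $e'$ lie in distinct components of $T\setminus S$), and then $(T\setminus\{e\})\cup\{e'\}$ is a spanning tree regardless of which components $e$ happens to straddle.
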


Proof of the following lemma contains the key part of our analysis.
It uses Proposition~\ref{prop:exchange} to charge
each edge in $\Xhat \setminus X'$ to a single connection
in $\MST_\alpha'$.

\begin{lemma}\label{lem:claim}
Let $S'$ be a set of $\min\{k-1, |\Xhat \setminus \Xfeas|\}$
edges with highest cost $\mc_\alpha'$ in $\MST_\alpha'$.
We have
\[w_\alpha'(X) \leqslant \mc_\alpha'(\MST_\alpha')
    + \min\bigg\{\eta^+, (\alpha-1)
        \sum_{e'\in S'}\mc_\alpha'(e')
    \bigg\}.\]
\end{lemma}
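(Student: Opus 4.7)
The plan is to reduce the lemma to bounding $w_\alpha(X\cap(\Xhat\setminus\Xfeas))$, and then prove the two halves of the minimum separately; only the second requires the exchange proposition. Write $E^* := X\cap(\Xhat\setminus\Xfeas)$. Since $w_\alpha'$ and $w_\alpha$ agree outside $\Xhat\setminus\Xfeas$, while $w_\alpha'(e) - w_\alpha(e) = (\alpha-1)w_\alpha(e)$ for $e\in\Xhat\setminus\Xfeas$, summing gives $w_\alpha'(X) = w_\alpha(X) + (\alpha-1)\,w_\alpha(E^*)$. Because $X=\bigcup_{e\in\MST_\alpha} p(e)$ with each $p(e)$ a shortest $w_\alpha$-path of length $\mc_\alpha(e)$, we have $w_\alpha(X)\le\sum_{e} w_\alpha(p(e)) = \mc_\alpha(\MST_\alpha)$. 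Pointwise $w_\alpha\le w_\alpha'$ implies $\mc_\alpha\le\mc_\alpha'$ on $\mG$, and combined with minimality of $\MST_\alpha$ this yields $\mc_\alpha(\MST_\alpha)\le\mc_\alpha(\MST_\alpha')\le\mc_\alpha'(\MST_\alpha')$. It therefore suffices to show
\[
(\alpha-1)\,w_\alpha(E^*) \;\le\; \min\!\Bigl\{\eta^+,\ (\alpha-1)\sum_{e'\in S'}\mc_\alpha'(e')\Bigr\}.
\]

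The bound by $\eta^+$ follows at once from $(\alpha-1)\,w_\alpha(e^*) = (1-1/\alpha)\,w(e^*) \le w(e^*)$ summed over $E^*\subseteq\Xhat\setminus\Xfeas$. For the second bound, apply Proposition~\ref{prop:exchange} with $T=\MST_\alpha$, $T'=\MST_\alpha'$, and cost $\mc_\alpha$ to obtain a bijection $\phi\colon\MST_\alpha\to\MST_\alpha'$ satisfying $\mc_\alpha(e)\le\mc_\alpha(\phi(e))\le\mc_\alpha'(\phi(e))$. For each $e^*\in E^*$ choose any $\mu(e^*)\in\MST_\alpha$ with $e^*\in p(\mu(e^*))$; such an edge exists because $e^*\in X=\bigcup_e p(e)$. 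Set $C':=\mu(E^*)\subseteq\MST_\alpha$. Partitioning $E^*$ by the fibers of $\mu$ and using $\mu^{-1}(e)\subseteq p(e)$,
\[
w_\alpha(E^*) = \sum_{e\in C'}\sum_{e^*\in\mu^{-1}(e)} w_\alpha(e^*)
\le \sum_{e\in C'} w_\alpha(p(e))
= \sum_{e\in C'}\mc_\alpha(e)
\le \sum_{e\in C'}\mc_\alpha'(\phi(e)).
\]
Finally $|C'|\le|E^*|\le|\Xhat\setminus\Xfeas|$ and $|C'|\le|\MST_\alpha|=k-1$, so $|C'|\le|S'|$; since $S'$ comprises the $|S'|$ costliest edges of $\MST_\alpha'$ under $\mc_\alpha'$ and $\phi(C')\subseteq\MST_\alpha'$ has size $|C'|\le|S'|$, we conclude $\sum_{e\in C'}\mc_\alpha'(\phi(e)) = \sum_{e'\in\phi(C')}\mc_\alpha'(e')\le \sum_{e'\in S'}\mc_\alpha'(e')$. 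Multiplying by $\alpha-1$ concludes the proof.

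The main obstacle I expect is resisting a tempting but weaker approach that bounds $w_\alpha'(p(e))$ connection-by-connection (e.g.\ by $\alpha\,\mc_\alpha'(\phi(e))$ on those connections whose shortest path meets $\Xhat\setminus\Xfeas$) and then tries to cap the number of such ``type B'' connections by $|\Xhat\setminus\Xfeas|$. That cardinality bound can genuinely fail: a single cheap edge of $\Xhat\setminus\Xfeas$ may lie on the shortest paths of several edges of $\MST_\alpha$, and no swap argument forces the opposite. Charging the set $E^*$ through the fibers of $\mu$ sidesteps this entirely: the accounting is driven by the size of the image $C'$, which is at most $|E^*|$ by definition of image and at most $k-1$ because $C'\subseteq\MST_\alpha$, and is therefore automatically at most $|S'|$, as required.
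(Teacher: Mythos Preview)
Your proof is correct and takes a genuinely different route from the paper's. The paper processes the connections $e_1,\ldots,e_{k-1}$ of $\MST_\alpha$ sequentially, maintaining the incremental set $F_{\leqslant i}$ of edges of $G$ already bought, and bounds $w_\alpha'(p(e_i)\setminus F_{\leqslant i-1})$ by a case split on whether a \emph{fresh} edge of $\Xhat\setminus\Xfeas$ appears; the incremental bookkeeping is exactly what guarantees that each false-positive edge triggers the expensive case at most once, so at most $\min\{k-1,|\Xhat\setminus\Xfeas|\}$ indices are expensive. You instead isolate the surplus $w_\alpha'(X)-w_\alpha(X)=(\alpha-1)\,w_\alpha(E^*)$ at the outset and charge each $e^*\in E^*$ to a single connection $\mu(e^*)\in\MST_\alpha$, summing over fibers; this avoids the incremental bookkeeping and makes the cardinality bound $|C'|\le|S'|$ fall out immediately from $|C'|\le|E^*|$ and $|C'|\le k-1$. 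Your ``obstacle'' paragraph correctly diagnoses the pitfall of a naive connection-by-connection bound without such a device; the paper's $F_{\leqslant i}$ construction is its fix for the same issue, while your fiber charging is an alternative one. Both arguments ultimately rely on Proposition~\ref{prop:exchange} in the same way to pass from $\MST_\alpha$ to $\MST_\alpha'$. Your version is arguably cleaner; the paper's is more operational in that it tracks the algorithm's purchases step by step.
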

\begin{proof}
Denote by $e_1', e_2', \ldots, e_{k-1}'$ the edges of $\MST'_\alpha$ and by $e_1, e_2, \ldots, e_{k-1}$ the edges of $\MST_\alpha$. Since $\MST_\alpha$ is a minimum spanning tree with respect to $c_\alpha(\cdot)$ and up to reordering the edges,
by Proposition~\ref{prop:exchange}, we can assume that
\[c_\alpha(e_i) \leqslant c_\alpha(e_i'), \quad \forall i \in \{1, \ldots, k-1\}.\]

Let $F_{\leqslant 0} := \emptyset$ and set $F_{\leqslant i} := F_{\leqslant i-1} \cup (p(e_i)\setminus F_{\leqslant i-1})$, for $i = \{1, \ldots, k-1\}$, so that $F = F_{k-1}$. 

To show the lemma, for each pair $e_i, e_i'$, we distinguish between two cases:

\begin{enumerate}[nosep,left=0pt]
\item $\boldsymbol{(p(e_i)\setminus F_{\leqslant i-1})\cap(\hat{F}\setminus F^*) = \emptyset}$. In this case, no weights of edges on $(p(e_i)\setminus F_{i-1}) \cap (\hat{F}\setminus F^*)$ have been scaled down, hence
\[c_\alpha'(p(e_i)\setminus F_{\leqslant i-1}) = c_\alpha(p(e_i)\setminus F_{i-1}) \leqslant c_\alpha(e_i') \leqslant c_\alpha'(e_i').\]
\item $\boldsymbol{(p(e_i)\setminus F_{\leqslant i-1})\cap(\hat{F}\setminus F^*) \neq \emptyset}$. In this case, the actual cost of the path corresponding to edge $e_i$ minus the edges in $F_{\leqslant i-1}$ already bought, i.e. $p(e_i)\setminus F_{\leqslant i-1}$, can be at most $\alpha$ times higher:
\[c_\alpha'(p(e_i)\setminus F_{\leqslant i-1}) \leqslant \alpha \cdot c_\alpha(e_i) \leqslant \alpha\cdot c_\alpha(e_i') \leqslant c_\alpha'(e_i') + (\alpha-1) \cdot c_\alpha'(e_i').\]
At the same time, we have
\[ c_\alpha'(p(e_i)\setminus F_{\leqslant i-1}) \leq
        c_\alpha(e_i) + c(\hat F \cap (p(e_i)\setminus F_{\leqslant i-1}))
        \leqslant c_\alpha'(e_i') + c(\hat F \cap (p(e_i)\setminus F_{\leqslant i-1})).
\]
\end{enumerate}
Since the latter case only happens once for each edge in $\hat{F}\setminus F^*$, we see that
\[
c_\alpha'(F) \leqslant \underbrace{\sum_{i=1}^{k-1} c_\alpha'(e_i')}_{c_\alpha'(\MST_\alpha')} + (\alpha-1)\max_{S\subseteq [k]: |S| = |\hat{F}\setminus F^*|}\sum_{i\in S} c_\alpha'(e_i'),
\]
and at the same time, we have $ c_\alpha'(F) \leqslant \MST_\alpha' + c(\hat F).$
The lemma follows.
\end{proof}

\begin{proof}[Proof of Theorem~\ref{thm:steiner}]
Since for all edges $e\in \Xhat \cap \Xfeas$ we have
$w_\alpha'(e) = w(e)/\alpha$, we can write
$w(e) = w_\alpha'(e) + (1-1/\alpha)w(e)$. Therefore, we have
\[w(X) \leqslant w_\alpha'(X) + (1-1/\alpha) \cdot w(\Xhat \cap \Xfeas).\]
Combining this bound with Lemma~\ref{lem:claim} and
\eqref{ineq:discounted_mst}, we get
\[ w(X) \leqslant 2 w(\Xfeas) - (1-1/\alpha) w(\Xhat \cap \Xfeas)
    + \min\bigg\{\eta^+, (\alpha-1)
        \sum_{e'\in S'}\mc_\alpha'(e')
    \bigg\}.
\]
Proposition~\ref{prop:exchange}
implies that
$\sum_{e'\in S'} c_\alpha'(e') \leq
\sum_{e\in S} c(e)$, since
$\MST_\alpha'$ is a minimum spanning tree on $\mG$
with respect to cost $c_\alpha'$ and
$\MST$ is a spanning tree on the same graph.
Now, it is enough to note that
\[ 2w(\Xfeas) - (1-1/\alpha) w(\Xhat \cap \Xfeas)
= (1+1/\alpha) w(\Xfeas) + (1-1/\alpha)w(\Xfeas\setminus \Xhat).
\qedhere
\]
\end{proof}

\subsection{Approximating the best alpha}
\label{sec:steiner-tree-alpha}

Let $\alpha^*$ be the parameter of Algorithm~\ref{alg:steiner}
which minimizes the upper bound in Theorem~\ref{thm:steiner}.
We show how to find $\alpha$ which achieves a bound at
most $(1+\epsilon)$ times worse than the bound with $\alpha^*$.

\begin{algorithm2e}
\caption{Steiner tree: search for the best $\alpha$}
\label{alg:steiner-alpha}
\For{$i = 0, \dotsc, \lceil \log_{1+\epsilon} \epsilon^{-1}\rceil$}{
    run Algorithm \ref{alg:steiner} with $\alpha = (1+\epsilon)^i$\;
}
output the best solution found during above iterations\;
\end{algorithm2e}

\begin{theorem}
\label{thm:steiner-alpha}
For a constant $\epsilon > 0$, Algorithm~\ref{alg:steiner-alpha}
runs in near-linear time and
finds a Steiner tree $X$ with weight
\[
w(X) \leqslant (1+\epsilon) \min_{\alpha\geqslant 1} \bigg\{
    \left(1+\frac1\alpha\right)w(\Xfeas) + \left(1-\frac1\alpha\right)\eta^-
    +  \min\bigg\{\eta^+, (\alpha-1) \cdot
        \sum_{e\in S}\mc(e)
    \bigg\}
\bigg\}.
\]
\end{theorem}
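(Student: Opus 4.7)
The plan is to combine the per-$\alpha$ guarantee of Theorem~\ref{thm:steiner} with a discretization argument over the geometric grid
\[
\mathcal{G} := \left\{(1+\epsilon)^i : i = 0, 1, \ldots, \lceil \log_{1+\epsilon} \epsilon^{-1}\rceil\right\}
\]
iterated by Algorithm~\ref{alg:steiner-alpha}. The near-linear running time is immediate: for any fixed constant $\epsilon > 0$ we have $|\mathcal{G}| = O(\epsilon^{-1}\log \epsilon^{-1}) = O(1)$, each iteration runs Algorithm~\ref{alg:steiner} whose bottleneck is the Mehlhorn subroutine of Proposition~\ref{prop:mehlhorn}, so the outer loop contributes only a constant factor overhead.

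For the approximation ratio, fix $\Xfeas$ and observe that the set $S$ of highest-cost $\MST$-edges in Theorem~\ref{thm:steiner} depends only on $G$, $T$, $\Xhat$, and $\Xfeas$, and in particular not on $\alpha$. Denote by
\[
B(\alpha) := \left(1+\frac1\alpha\right)w(\Xfeas) + \left(1-\frac1\alpha\right)\eta^-
    +  \min\bigg\{\eta^+, (\alpha-1)
        \sum_{e\in S}\mc(e)
    \bigg\}
\]
the upper bound from Theorem~\ref{thm:steiner} as a function of $\alpha$, and let $\alpha^\star \in [1,\infty)$ be a minimizer. Applying Theorem~\ref{thm:steiner} to every $\alpha\in \mathcal{G}$ gives $w(X) \leqslant \min_{\alpha \in \mathcal{G}} B(\alpha)$, so it suffices to exhibit $\alpha \in \mathcal{G}$ with $B(\alpha) \leqslant (1+\epsilon) B(\alpha^\star)$.

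I will do this by cases based on whether $\alpha^\star$ lies in the range spanned by $\mathcal{G}$. Note that $\alpha_{\max} := \max \mathcal{G} \geqslant \epsilon^{-1}$. If $\alpha^\star \leqslant \alpha_{\max}$, I take $\alpha$ to be the largest element of $\mathcal{G}$ with $\alpha \leqslant \alpha^\star$, so $\alpha \leqslant \alpha^\star \leqslant (1+\epsilon)\alpha$, whence $1/\alpha \leqslant (1+\epsilon)/\alpha^\star$; otherwise I take $\alpha = \alpha_{\max}$, so $\alpha \leqslant \alpha^\star$ and $1/\alpha \leqslant \epsilon$. In both cases I compare the three summands of $B(\alpha)$ and $B(\alpha^\star)$ term by term: the first satisfies $(1+1/\alpha) \leqslant (1+\epsilon)(1+1/\alpha^\star)$ (in the first case via $1/\alpha - 1/\alpha^\star \leqslant \epsilon/\alpha^\star \leqslant \epsilon(1+1/\alpha^\star)$, in the second via $1+1/\alpha \leqslant 1+\epsilon \leqslant (1+\epsilon)(1+1/\alpha^\star)$); the second only shrinks, since $\alpha \leqslant \alpha^\star$ gives $1-1/\alpha \leqslant 1-1/\alpha^\star$; and the minimum in the third is bounded by the corresponding minimum at $\alpha^\star$ because $(\alpha-1)\sum_{e\in S}\mc(e) \leqslant (\alpha^\star-1)\sum_{e\in S}\mc(e)$. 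Summing the three bounds yields $B(\alpha) \leqslant (1+\epsilon) B(\alpha^\star)$, completing the proof. The only mildly delicate point is ensuring that the grid reaches at least $\epsilon^{-1}$, which is precisely what allows the second case to absorb arbitrarily large $\alpha^\star$ without requiring $\alpha^\star \in \mathcal{G}$.
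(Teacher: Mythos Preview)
Your proof is correct and follows the same high-level strategy as the paper: discretize $\alpha$ on the geometric grid $\mathcal{G}$ and argue that some grid point is nearly as good as the optimum $\alpha^\star$. The execution, however, is different and in fact cleaner than the paper's. You round $\alpha^\star$ \emph{down} to the nearest grid point and observe that the second and third summands of $B(\cdot)$ are monotone nondecreasing in $\alpha$, so only the first summand $(1+1/\alpha)w(\Xfeas)$ needs to be controlled; this yields the $(1+\epsilon)$ factor directly. The paper instead argues that $B((1+\epsilon)\alpha)$ and $B(\alpha)$ differ by at most a constant factor in \emph{both} directions, bounding each term separately without exploiting monotonicity; this is more laborious and, as written, only gives a $(1+5\epsilon)$ factor (which one would then absorb by reparametrizing $\epsilon$). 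Your monotonicity observation is the right simplification here.
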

\begin{proof}
Algorithm~\ref{alg:steiner-alpha} performs a constant number
$\lceil\log_{1+\epsilon} \epsilon^{-1}\rceil + 1$ of iterations
of Algorithm~\ref{alg:steiner}, which runs in near-linear time.

First, note that we need to consider only $\alpha \leqslant \epsilon^{-1}$,
because
\begin{align*}
(1+\epsilon)c(\Xfeas) + (1-\epsilon)\eta^-
    +  \min\bigg\{\eta^+, (\epsilon^{-1} -1) \cdot
    \sum_{e\in S}\mc(e)\bigg\}\bigg)
    \leqslant (1+\epsilon)f(\alpha)
\end{align*}
for any $\alpha > \epsilon^{-1}$.

It is enough to show that, for any $\alpha$, we have
$f((1+\epsilon)\alpha) \in [(1+5\epsilon)^{-1}f(\alpha), (1+5\epsilon)f(\alpha)]$.
We show this for every term of $f(\alpha)$ separately.
We have
\[
\big(1+\frac1{\alpha}\big)c(\Xfeas)
    \geqslant \bigg(1+\frac{1}{(1+\epsilon)\alpha}\bigg)c(\Xfeas)
    = \frac{(1+\epsilon)\alpha + 1}{(1+\epsilon)\alpha}c(\Xfeas)
    \geqslant (1+\epsilon)^{-1} (1+\frac{1}{\alpha})c(\Xfeas).
\]
Similarly, we have
\[
\big(1-\frac{1}{\alpha}\big)\eta^-
    \leqslant \bigg(1-\frac{1}{(1+\epsilon)\alpha}\bigg)\eta^-
    \leqslant \bigg(\frac{(1+\epsilon)\alpha - 1}{(1+\epsilon)\alpha}\bigg)\eta^-
    \leqslant (1+\epsilon)\bigg(\frac{\alpha - 1}{(1+\epsilon)\alpha}\bigg)\eta^-,
\]
which is at most $(1+\epsilon)(1-1/\alpha)\eta^-$.
For the last term, we have
\[ \alpha-1 \leqslant (1+\epsilon)\alpha -1
    = \alpha - 1 + \epsilon\alpha
    \leqslant (1+2\epsilon)(\alpha - 1)
\]
if $\alpha \geqslant 2$.
If $\alpha < 2$, we have $\epsilon\alpha
    \sum_{e\in S}\mc(e_i)
    \leqslant 4\epsilon c(\Xfeas)$.
\end{proof}

\subsection{Tight example for our analysis}
\label{sec:steiner-tight}

Figure~\ref{fig:steiner-lb} describes an instance with $k=n-1$ terminals
and a prediction
with a single false-negative edge of weight $1+\epsilon$
and a single false-positive edge of weight $\beta > 2$.
Algorithm~\ref{alg:steiner} with $\alpha \in [1, (1+\epsilon)^{-1})$
achieves approximation ratio approaching $2$ as $n$ increases.
With $\alpha \in (\frac{1}{1+\epsilon}, \frac{\beta}{1+\epsilon})$,
its approximation ratio is equal to 1.
With $\alpha > \frac{\beta}{1+\epsilon}$,
it approaches
$\frac{\OPT + \beta}{\OPT}$, which is equal to $2$ if we choose $\beta = \OPT$.
This shows that the best choice of $\alpha$ may not be $1$ nor approaching $\infty$.

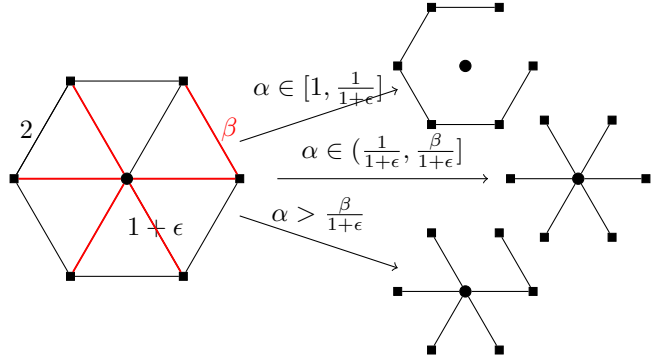
\begin{figure}[b]
    \centering
    \begin{tikzpicture}[scale=1.5]
    \begin{scope}
    \pgfmathtruncatemacro{\k}{6} 
    
    \foreach \i in {1,...,\k}
        \draw (\i*360/\k:1cm) node[rectangle, draw, fill=black, inner sep=1.5pt] (v\i) {};
    
    \node[circle, draw, fill=black, inner sep=1.5pt] (v\the\numexpr\k+1\relax) at (0,0) {};
    
    \foreach \i [evaluate=\i as \j using {int(mod(\i,\k)+1)}] in {1,...,\k}
        \draw (v\i) -- (v\j);
    
    \foreach \i in {1,...,\k}
        \draw (v\the\numexpr\k+1\relax) -- (v\i);

    \draw[line width = 0.25mm, red] (v1) -- (v\k) node[midway, right] {$\beta$};

    \draw (v2) -- (v3) node[midway, left] {$2$};

    \draw (v\the\numexpr\k+1\relax) -- (v5) node[pos = 0.5] {$1+\epsilon$};

    \foreach \i in {2,...,\k}
        \draw[line width = 0.25mm, red] (v\the\numexpr\k+1\relax) -- (v\i);

    \draw[->] (1,0.33) -- (1.2*2,1.2*0.66) node[midway, above] {$\alpha \in [1, \tfrac{1}{1+\epsilon}]$};
    \draw[->] (4/3,0) -- (1.2*8/3,1.2*0) node[midway, above] {$\alpha \in (\tfrac{1}{1+\epsilon}, \tfrac{\beta}{1+\epsilon}]$};
    \draw[->] (1,-0.33) -- (1.2*2,-1.2*0.66) node[midway, above] {$\alpha > \tfrac{\beta}{1+\epsilon}$};

    \end{scope}

    \begin{scope}[xshift = 3cm, yshift = 1cm, scale = 0.6]
    \pgfmathtruncatemacro{\k}{6} 
    
    \foreach \i in {1,...,\k}
        \draw (\i*360/\k:1cm) node[rectangle, draw, fill=black, inner sep=1.5pt] (v\i) {};
    
    \node[circle, draw, fill=black, inner sep=1.5pt] (v\the\numexpr\k+1\relax) at (0,0) {};
    
    \foreach \i [evaluate=\i as \j using {int(mod(\i,\k)+1)}] in {1,...,\the\numexpr\k-1\relax}
        \draw (v\i) -- (v\j);
    \end{scope}

    \begin{scope}[xshift = 4cm, yshift = 0cm, scale = 0.6]
    \pgfmathtruncatemacro{\k}{6} 
    
    \foreach \i in {1,...,\k}
        \draw (\i*360/\k:1cm) node[rectangle, draw, fill=black, inner sep=1.5pt] (v\i) {};
    
    \node[circle, draw, fill=black, inner sep=1.5pt] (v\the\numexpr\k+1\relax) at (0,0) {};
    
    \foreach \i in {1,...,\k}
        \draw (v\the\numexpr\k+1\relax) -- (v\i);
    \end{scope}

    \begin{scope}[xshift = 3cm, yshift = -1cm, scale = 0.6]
    \pgfmathtruncatemacro{\k}{6} 
    
    \foreach \i in {1,...,\k}
        \draw (\i*360/\k:1cm) node[rectangle, draw, fill=black, inner sep=1.5pt] (v\i) {};
    
    \node[circle, draw, fill=black, inner sep=1.5pt] (v\the\numexpr\k+1\relax) at (0,0) {};
    
    \foreach \i in {2,...,\k}
        \draw (v\the\numexpr\k+1\relax) -- (v\i);

    \draw (v1) -- (v\k);
    \end{scope}

\end{tikzpicture}
    \caption{Steiner tree problem with $k=n-1$ terminals (square) and one non-terminal vertex (circle). Outer edges have weight $2$, except one edge of weight $\beta$. Inner edges have weight $1+\epsilon$. Red edges are predicted. On the right are the respective Steiner trees output by Algorithm~\ref{alg:steiner} with different parameters of $\alpha$.}
    \label{fig:steiner-lb}
\end{figure}

We can use this construction to show that the coefficient of $\sum_{e\in S} \mc(e)$
in \eqref{eq:steiner} is tight for the given algorithm.
First consider the input graph in Figure~\ref{fig:steiner-lb}
with $\beta = 2n$. Algorithm~\ref{alg:steiner}
with $\alpha = \beta$ receiving a prediction $\Xhat$ containing
all red edges, achieves cost
$(n-1)(1+\epsilon) + \beta \geqslant \OPT + \frac\alpha2 2 - (1+\epsilon)$,
where $2$ is the cost of the most expensive connection in
the minimum spanning tree on the metric closure of the terminals.
Note that all terms except for $-(1+\epsilon)$ go to infinity as $n$ increases.

Similarly, we can show the tightness of the coefficient of $\eta^-$.
Algorithm~\ref{alg:steiner} with $\Xhat = \emptyset$
achieves cost $2n \geqslant (1+\epsilon)^{-1} \big((1+1/\alpha) \OPT + \eta^-$,
since $\eta^- = \OPT = n(1+\epsilon)$.

\clearpage

\section{Lower bounds}
\label{sec:lower-bounds}

\subsection{Preliminaries}

In order to obtain lower bounds for our setting we use the following results of~\citet{KhotR08}.

\begin{proposition}[\cite{KhotR08}]
\label{prop:IS}
Assuming UGC, for every constant $\delta>0$, it is NP-hard to distinguish, for an input $n$-vertex graph $G$, between the following two cases:
\begin{itemize}
    \item (YES) $G$ contains an independent set of size at least $\frac{n}{2}  -\delta n$.
    \item (NO) $G$ contains no independent set of size at least $\delta n$.
\end{itemize}
\end{proposition}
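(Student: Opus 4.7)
The plan is to derive the result from the Unique Games Conjecture via a long-code reduction, following the Khot–Regev template. I would start from a Unique Games instance $\Phi$ on some alphabet of size $R$: UGC asserts that for every $\zeta>0$ it is NP-hard to tell whether $\mathrm{val}(\Phi) \geqslant 1-\zeta$ or $\mathrm{val}(\Phi) \leqslant \zeta$. From $\Phi$ I will build an $n$-vertex graph $G_\Phi$ for which the YES case yields an independent set of size $(\tfrac12-\delta)n$ and the NO case admits no independent set of size $\delta n$; hardness of Proposition~\ref{prop:IS} then follows directly.

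The construction creates, for each variable $v$ of $\Phi$, a ``cloud'' of $2^R$ vertices identified with Boolean functions $f\colon\{-1,1\}^R\to\{-1,1\}$ (the long-code encoding of an assignment to $v$). For each UG constraint $\pi_{uv}$, I add edges between the $u$- and $v$-clouds according to a correlated-pair test: two functions $f_u,f_v$ are joined whenever, on a suitable biased/noisy product distribution aligned with $\pi_{uv}$, they both output $-1$ on a sampled correlated pair $(x,y)$. Restricted to a single cloud this is precisely the classical noise-graph on the Boolean cube, whose large independent sets, by the invariance principle and Borell's isoperimetric inequality, are essentially half-spaces (i.e.\ dictators).

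For completeness, a labeling satisfying $1-\zeta$ fraction of $\Phi$'s constraints induces a dictator function in each cloud; selecting in each cloud the $\{f: f(\mathbf{x})=-1\}$ half corresponding to that dictator gives an independent set of density $\tfrac12$ up to an $O(\zeta)$ loss from violated constraints, so its size is $(\tfrac12-\delta)n$ for $\zeta$ small enough. Soundness is the main obstacle: given a putative independent set of density $\delta$, an averaging argument isolates many adjacent cloud pairs on which the indicator functions $f_u,f_v$ each have measure $\geqslant \delta$ yet jointly avoid the edge relation. I would then run the standard dictatorship-versus-low-influences dichotomy—either (i) both $f_u,f_v$ have no influential coordinate, in which case the Mossel–O'Donnell–Oleszkiewicz invariance principle and Borell's theorem force the measure of an ``independent set'' in the correlated-pair test to be strictly below $\delta$, a contradiction, or (ii) some coordinate has non-negligible influence, from which a random decoding recovers a UG labeling satisfying substantially more than $\zeta$ fraction of constraints, again contradicting the UGC hypothesis.

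The principal technical hurdle is calibrating the correlation/noise parameter of the cloud-edges so that Borell's inequality rules out low-influence independent sets of any density $\geqslant \delta$ while at the same time keeping the completeness loss bounded by $\delta$; this is exactly where the $\delta$ in the statement enters, and where the tightness of Gaussian noise stability is used in an essential way.
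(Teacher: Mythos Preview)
The paper does not prove this proposition at all: it is quoted verbatim as a known hardness result of Khot and Regev and used as a black box in the subsequent lower-bound arguments. So there is no ``paper's own proof'' to compare your proposal against.

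That said, your sketch is a reasonable high-level outline of the actual Khot--Regev argument: a long-code/PCP reduction from Unique Games, with completeness coming from dictator encodings of a good UG labeling and soundness via an influence-decoding dichotomy backed by Gaussian noise-stability bounds. A couple of caveats if you intend this as more than a pointer to the literature. First, the original Khot--Regev paper works with \emph{biased} long codes over $[R]$ (not the $\pm 1$ Boolean cube you describe) and its soundness analysis predates and does not literally invoke the Mossel--O'Donnell--Oleszkiewicz invariance principle or Borell's theorem; what you have written is closer to later re-derivations of the same result. Second, your edge description (``both output $-1$ on a sampled correlated pair'') is vague about which test is actually used, and the soundness step ``Borell rules out low-influence independent sets of any density $\geqslant\delta$'' hides the real work: one needs the specific two-function noise-stability bound (or the ``It Ain't Over Till It's Over'' lemma in the original) quantitatively tied to the bias parameter, not just qualitative isoperimetry. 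As a proof \emph{plan} this is fine; as a proof it would need those parameters and the precise test spelled out.
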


It directly implies the following corollary for the Minimum Vertex Cover problem:

\begin{corollary}[\cite{KhotR08}]
\label{cor:VC}
Assuming UGC, for every constant $\delta>0$ it is NP-hard to distinguish, for an input $n$-vertex graph $G$, between the following two cases:
\begin{itemize}
    \item (YES) $G$ has a vertex cover of size at most $\frac{n}{2}+\delta n$.
    \item (NO) $G$ has no vertex cover of size at most $(1-\delta) n$.
\end{itemize}
\end{corollary}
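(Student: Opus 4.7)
The plan is to derive the vertex cover hardness directly from Proposition~\ref{prop:IS} by the standard complementation between independent sets and vertex covers, without doing any additional reduction work. The key observation is that for an $n$-vertex graph $G = (V,E)$, a subset $S \subseteq V$ is an independent set if and only if its complement $V \setminus S$ is a vertex cover, and $|V \setminus S| = n - |S|$. Thus the existence of an independent set of size at least $k$ is equivalent to the existence of a vertex cover of size at most $n - k$, and this equivalence preserves any polynomial-time computation (the complement can be computed in linear time and one can move back and forth between the two problems).

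First I would set up notation and state the bijection $S \mapsto V \setminus S$ between independent sets and vertex covers, noting the resulting identity on sizes. Next I would take the same instance $G$ produced by the Khot--Regev hardness result from Proposition~\ref{prop:IS} for any fixed $\delta > 0$, and translate the two cases: in the YES case, an independent set of size at least $n/2 - \delta n$ gives, by complementation, a vertex cover of size at most $n - (n/2 - \delta n) = n/2 + \delta n$, which is exactly the YES guarantee needed for the corollary. In the NO case, having no independent set of size at least $\delta n$ means that every independent set has size strictly less than $\delta n$, so by complementation every vertex cover has size strictly greater than $n - \delta n = (1-\delta)n$; equivalently, $G$ has no vertex cover of size at most $(1-\delta)n$, matching the NO guarantee.

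Finally I would argue that any polynomial-time algorithm that distinguishes the two cases of the corollary would, via the same complementation, distinguish the two cases of Proposition~\ref{prop:IS}, contradicting the UGC-based hardness; hence the corollary follows under the same assumption. There is essentially no obstacle here: the only thing to be careful about is that the thresholds in the YES/NO cases of Proposition~\ref{prop:IS} complement cleanly to the thresholds in the statement, and that the strict-versus-nonstrict inequality in the NO case does not cause a loss (it does not, since $\delta$ is an arbitrary positive constant and one can replace $\delta$ by $\delta/2$ to absorb any off-by-one slack if needed).
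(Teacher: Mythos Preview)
Your proposal is correct and matches the paper's approach: the paper simply states that the corollary ``directly'' follows from Proposition~\ref{prop:IS}, and the complementation $S \mapsto V \setminus S$ you describe is precisely that direct implication, with the thresholds translating exactly as you computed.
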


Note that Corollary~\ref{cor:VC} implies that, assuming UGC and P$\neq$NP, Minimum Vertex Cover cannot be approximated within a factor of $2-\epsilon$, for any $\epsilon > 0$.

\subsection{Lower bound for a minimization problem}

We now argue that our Theorem~\ref{thm:blackboxmin} for minimization selection problems cannot be improved.
For a specific minimization selection problem, namely the Minimum Vertex Cover problem, which has a folklore 2-approximation
algorithm, Theorem~\ref{thm:blackboxmin} and Corollary~\ref{cor:robust} with $\rho=2$ imply existence
of a learning-augmented algorithm with approximation ratio
$\min\{1+ \smash{\frac{\eta^+ + \eta^-}{\OPT}},2\}$. We show that this is the best possible upper bound.

\begin{theorem}
Let $A$ be a polynomial-time
learning-augmented approximation algorithm for Minimum Vertex Cover that guarantees an approximation ratio of at most
\[1+f\left(\frac{\eta^+}{\OPT},\frac{\eta^-}{\OPT}\right),\]
for some non-decreasing function $f$. Then, assuming UGC, we have
\[f(x, y) \geqslant x + y,\]
for any choice of $x, y \geqslant 0$ satisfying
$x + y \leqslant 1$.
\label{thm:lb-min}
\end{theorem}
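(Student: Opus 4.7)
The plan is a contrapositive reduction from the UGC-hard promise problem of Corollary~\ref{cor:VC}. Suppose toward contradiction that $f(x_0, y_0) < x_0 + y_0$ for some $x_0, y_0 \geqslant 0$ with $x_0 + y_0 \leqslant 1$. By the monotonicity of $f$, the bound on the boundary $x_0 + y_0 = 1$ follows from the interior bound via
\[
    f(x_0, y_0) \;\geqslant\; \sup\bigl\{f(x,y) : x \leqslant x_0,\, y \leqslant y_0,\, x + y < 1\bigr\} \;\geqslant\; \sup\bigl\{x + y : x \leqslant x_0,\, y \leqslant y_0,\, x + y < 1\bigr\} \;=\; x_0 + y_0,
\]
so it is enough to handle the strict interior $x_0 + y_0 < 1$.

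Given an instance $G$ on $n$ vertices from Corollary~\ref{cor:VC} with $\delta > 0$ chosen small depending on $1 - x_0 - y_0$, I would construct in polynomial time a padded graph $\tilde{G}$ obtained from $G$ by attaching a suitable gadget $H$ on fresh disjoint vertices -- for instance, a collection of star graphs $K_{1,t}$ with $t \geqslant 2$ whose centers are forced into every optimal cover of $\tilde{G}$, together with a separately calibrated set of decoy vertices that cannot belong to any optimum -- and a polynomial-time computable prediction $\Xhat$ containing the canonical cover of the stars together with the decoy set. The sizes of the two gadget components would be tuned so that (i) the error $(\eta^+, \eta^-)$ of $\Xhat$ with respect to the closest optimum of $\tilde{G}$ satisfies $\eta^+ \leqslant x_0 \cdot \OPT(\tilde{G})$ and $\eta^- \leqslant y_0 \cdot \OPT(\tilde{G})$ uniformly in both the YES and NO cases of $G$, and (ii) the ratio $\OPT_{\mathrm{NO}}(\tilde{G}) / \OPT_{\mathrm{YES}}(\tilde{G})$ strictly exceeds $1 + x_0 + y_0$. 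Running $A(\tilde{G}, \Xhat)$ and combining the assumed strict inequality with the monotonicity of $f$ then yields an output $X$ with $|X| \leqslant (1 + f(x_0, y_0))\, \OPT(\tilde{G}) < (1 + x_0 + y_0)\, \OPT_{\mathrm{YES}}(\tilde{G})$ in the YES case, which by (ii) is strictly below $\OPT_{\mathrm{NO}}(\tilde{G}) \leqslant |X|$ in the NO case; thresholding on $|X|$ is therefore a polynomial-time distinguisher, contradicting UGC.

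The main obstacle is simultaneously meeting (i) and (ii). The two requirements pull against each other: a larger gadget makes it easier to dominate the unknown vertex cover of $G$ inside $\OPT(\tilde{G})$ and thereby bound the error fractions, but at the same time dilutes the $2{:}1$ YES/NO gap that $G$ contributes to $\OPT(\tilde{G})$. The axis cases are instructive: for $(x_0, y_0) = (0, y_0)$ a plain disjoint union with enough star gadgets (and $\Xhat$ containing just their centers) produces the errors $(\eta^+, \eta^-) = (0, N(G))$, and a direct computation shows that the balance required for (i) and (ii) becomes compatible exactly when $f(0, y_0) < y_0$ is strict, recovering $f(0, y_0) \geqslant y_0$; symmetrically for $(x_0, 0)$. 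For an interior point the composite gadget must combine both mechanisms with matching calibrations, and the condition $x_0 + y_0 < 1$ is what leaves a positive surplus $1 - x_0 - y_0 > 0$ in the gap inequality that is available to absorb the dilution produced by the gadget.
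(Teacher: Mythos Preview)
Your proposal has a genuine gap: the gadget you describe cannot establish $f(x_0,y_0)\geqslant x_0+y_0$ at interior points, only the weaker bound $f(x_0,y_0)\geqslant \max(x_0,y_0)$. With $s$ star centers, $d$ isolated decoys, and one copy of $G$, and with $\Xhat$ equal to the star centers together with the decoys, we have $\OPT(\tilde G)=s+|X^*_G|$ (the decoys are isolated and contribute nothing), $\eta^+=d$, and $\eta^-=|X^*_G|$. In the YES case, condition (i) for $\eta^-$ forces $s\geqslant(1-y_0)|X^*_G|/y_0$, which at the extremal value $|X^*_G|\approx n/2$ gives $s\gtrsim n(1-y_0)/(2y_0)$. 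To separate YES from NO you need (even in the weaker form) $(1+c)(s+n/2)<s+n$ with $c:=f(x_0,y_0)$, i.e.\ $s\lesssim n(1-c)/(2c)$. These two constraints are compatible only when $c<y_0$, not merely when $c<x_0+y_0$. For example, with $x_0=y_0=0.4$ and $c=0.6$ (so $c<x_0+y_0=0.8$), the first inequality demands $s\gtrsim 0.75n$ while the second demands $s\lesssim n/3$. Your closing claim that the ``surplus $1-x_0-y_0$'' absorbs the dilution is therefore incorrect: the binding quantity is $y_0-c$, which can be negative.

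The underlying reason is that your decoys are computationally trivial false positives. Since they belong to no optimum and are structurally identifiable (isolated vertices or star leaves), the algorithm $A$ can discard them at zero cost, so the $\eta^+$ share of the error budget buys no hardness whatsoever; the decoy parameter $d$ does not enter the gap inequality at all. The paper's construction avoids this by making \emph{both} error components arise from copies of $G$ itself: it forms a disjoint union of a fixed anchor clique $G_0$ (weight $1-x'-y'$ per vertex) with two copies $G_+,G_-$ of $G$ (weights $x'$ and $y'$ per vertex), and takes $\Xhat$ to contain all of $V(G_+)$ and none of $V(G_-)$. Then $\eta^+$ consists of the \emph{non-cover} vertices of $G_+$ and $\eta^-$ of the \emph{cover} vertices of $G_-$, neither of which $A$ can identify without effectively solving $G$. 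A short weight-counting argument then shows that any $(1+x+y-\epsilon)$-approximate cover of the union must restrict to a cover of size below $(1-\delta)n$ on at least one of $G_+,G_-$, yielding the distinguisher. The two-copies idea is exactly the missing ingredient in your construction.
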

\begin{proof}
By contradiction,
consider $x,y$ such that $f(x,y) < x + y$.
We denote $\epsilon := (x+y) - f(x,y)$, and let $\delta := \epsilon/20$.
We will show how to use the hypothesized learning-augmented algorithm $A$ to construct an algorithm $\bar A$
for the (prediction-less) Minimum Vertex Cover problem
that can distinguish between instances on $n$ vertices with optimum value more
than $(1-\delta)n$ and instances with optimum value
at most $(1+\delta)\frac{n}2$.
Such an algorithm would contradict Corollary~\ref{cor:VC}.
In order to provide a cleaner exposition, we first assume that $A$ can actually
solve \emph{Min-Weight} Vertex Cover with the given approximation ratio.
This assumption can be removed easily, as discussed at the end of the proof.

We construct algorithm $\bar A$ that, for any input graph
$G$ with minimum vertex cover $X^*$ of size
$|X^*| \in  \big[(1-\delta)\frac{n}2, (1+\delta)\frac{n}2\big]$,
produces a solution $X$ such that $|X| < (1-\delta)n$,
certifying that $G$ does not belong to the NO case.
Note that for graphs with vertex cover of size $< (1-\delta)\frac{n}2$
such solution can be easily produced by the folklore 2-approximation algorithm.

Given the input graph $G$,
$\bar A$ produces a weighted graph $\bar G$ and a
prediction $\hat X$, and runs $A$ on $\bar G$ with $\hat X$.
Graph $\bar G$ is a disjoint union of three graphs: $G_0$, $G_+$, and $G_-$. Let $x' := \frac{1-\delta}{1+\delta}x$
and $y' := \frac{1-\delta}{1+\delta}y$.
\begin{itemize}[left=0pt]
\item
$G_0$ is a complete graph on $\frac{n}2+1$ vertices, we have
$\bar w(v)=(1-x'-y')$ for each $v\in V(G_0)$.
\item $G_+$ is a copy of $G$ with scaled weights, we have
$\bar w(v) = x'$ for each $v\in V(G_+)$.
\item $G_-$ is a copy of $G$ with scaled weights, we have
$\bar w(v) = y'$ for each $v\in V(G_-)$.
\end{itemize}
The prediction
$\hat X$ contains arbitrary $\frac{n}2$ vertices from $G_0$,
all vertices of $V(G_+)$ and no vertices of $V(G_-)$.
Receiving a solution $\bar X$ found by the algorithm $A$ on $\bar G$,
we denote $X_+ := \bar X \cap V(G_+)$ and  $X_- := \bar X \cap V(G_-)$.
Algorithm $\bar A$ returns $X$ which is either $X_+$ or $X_-$, choosing
the one with the smaller size.
We now prove that $|X| < (1-\delta)n$ if $G$ is the YES case.

First, we show that $\eta^+/\OPT \leqslant x$ and
$\eta^-/\OPT \leqslant y$.
Let $\OPT$ be the value of an optimal solution for $\bar G$.
We have $\OPT = (1-x-y)\frac{n}2 + x|X^*| + y|X^*|$ which belongs to
$\big[(1-\delta)\frac{n}2, (1+\delta)\frac{n}2\big]$ by the assumption about
the size of $X^*$.
For the error, we have
\begin{align*}
\eta^+ &= x'\cdot (n-|X^*|)=\frac{1-\delta}{1+\delta}x\cdot (n-|X^*|),\quad\text{and}\\
\eta^- &= y'\cdot |X^*| = \frac{1-\delta}{1+\delta}y\cdot |X^*|.
\end{align*}
Since both $(n-|X^*|)$ and $|X^*|$ are at most $(1+\delta)\frac{n}{2}$
and $\OPT \geqslant (1-\delta)\frac{n}{2}$,
we have $\eta^+/\OPT \leqslant x$ and $\eta^-/\OPT \leqslant y$.
By monotonicity of $f$,
this implies that $\bar X$ is a $(1+x+y-\epsilon)$-approximate solution.
Therefore, we have
\begin{equation}
\label{eq:lb-min}
\begin{split}
\bar w(\bar X)
&\leqslant (1+x+y-\epsilon)\cdot\OPT\\
&\leqslant (1+x+y)\cdot(1+\delta)\frac{n}{2} - \epsilon\cdot (1-\delta)\frac{n}{2}\\
&\leqslant \frac{n}{2} + (x+y)\frac{n}{2} + (1+x+y+\epsilon)\delta \frac{n}{2} - \epsilon \frac{n}{2}\\
&\leqslant \frac{n}{2} + (x+y)\frac{n}{2} + 4\delta \frac{n}{2} - \epsilon \frac{n}{2}\\
&\leqslant \frac{n}{2} + (x+y)\frac{n}{2} - 0.8\epsilon \frac{n}{2},
\end{split}
\end{equation}
by our choice of $\delta = \epsilon/20$. On the other hand, we can write
\begin{equation}
\label{eq:lb-min2}
\begin{split}
\bar w(\bar X)
&\geqslant  (1-x'-y')\frac{n}{2} + x'|X_+| + y'|X_-|\\
&= \frac{n}{2} + x'\big(|X_+|- \frac{n}{2}\big) + y'\big(|X_-|-\frac{n}{2}\big)\\
&\geqslant \frac{n}{2} + x\big(|X_+|- \frac{n}{2}\big) + y\big(|X_-|-\frac{n}{2}\big) - 4\delta \frac{n}{2}\\
&\geqslant \frac{n}{2} + x\big(|X_+|- \frac{n}{2}\big) + y\big(|X_-|-\frac{n}{2}\big) - 0.2\epsilon \frac{n}{2}.
\end{split}
\end{equation}
In the second to last step, we use the definition of $x'$ and $y'$, and the fact that
$\frac{1-\delta}{1+\delta} \geqslant 1-2\delta$.
Inequalities \eqref{eq:lb-min} and \eqref{eq:lb-min2}
imply that at least one of $|X_+|-\frac{n}{2}$ and $|X_-|-\frac{n}{2}$
must be less than or equal to $\frac{n}{2} - 0.6\epsilon \frac{n}{2} < \frac{n}{2} - \delta n$. Indeed, otherwise we would have
\[
\bar w(\bar X)
> \frac{n}{2}+ (x+y)\frac{n}{2} - (x+y) 0.6\epsilon  \frac{n}{2} - 0.2\epsilon \frac{n}{2}
\geqslant \frac{n}{2}+ (x+y)\frac{n}{2}  - 0.8\epsilon  \frac{n}{2},
\]
contradicting \eqref{eq:lb-min}.
In other words, we have either $|X_+| < (1-\delta)n$ or $|X_-| < (1-\delta)n$,
which concludes the proof.

In order to transform $\bar G$ into an unweighted instance,
we take (roughly) $1/x'y'$ copies of $G_0$, $1/(1-x'-y')y'$ copies of $G_+$
and $1/(1-x'-y')x'$ copies of $G_-$, each vertex having weight 1.
Prediction contains $n/2$ vertices from each copy of $G_0$, all vertices
of all copies of $G_+$ and no vertices from the copies of $G_-$.
This is to ensure that the minimum vertex cover on the copies of $G_0$, on the
copies of $G_+$, and on the copies of $G_-$ respectively have
approximately the same ratio as in $\bar G$.
In order to achieve approximation ratio $1+x+y-\epsilon$,
the algorithm $A$ would need to find a solution to at least one copy
of $G_-$ or $G_+$ with size smaller than $(1-\delta)n$.
\end{proof}

\subsection{Lower bound for a maximization problem}

We show a similar result for the class of considered maximization problems.
\begin{theorem}
Let $A$ be a polynomial-time
learning-augmented approximation algorithm for Maximum Independent Set that guarantees an approximation ratio of at least
\[1-f\left(\frac{\eta^+}{\OPT},\frac{\eta^-}{\OPT}\right),\]
for some non-decreasing function $f$. Then, assuming UGC, we have
\[f(x, y) \geqslant x + y,\]
for any choice of
$x, y \geqslant 0$
satisfying
$x + y \leqslant 1$.
 \label{thm:lb-max}
\end{theorem}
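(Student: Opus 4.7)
The plan is to mirror the proof of Theorem~\ref{thm:lb-min} almost line for line, using Proposition~\ref{prop:IS} instead of Corollary~\ref{cor:VC} as the source of hardness. Suppose for contradiction that $f(x,y) < x+y$ for some admissible $x, y$; set $\epsilon := (x+y) - f(x,y)$ and $\delta := \epsilon/20$. I would build, from the hypothesized algorithm $A$, a polynomial-time algorithm $\bar A$ that, on any $n$-vertex graph $G$ from the YES side of Proposition~\ref{prop:IS}, produces an independent set of size strictly larger than $\delta n$ in $G$, which would suffice to distinguish the YES and NO cases.

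First I would dispose of the case $|I^*|\geqslant (1+\delta)n/2$ by a preprocessing step: applying the folklore $2$-approximation for Minimum Vertex Cover and taking the complement already yields an independent set of size more than $\delta n$. Hence we may assume $|I^*|\in[(1-\delta)n/2,(1+\delta)n/2]$. Next I would set $x' := x(1-\delta)/(1+\delta)$ and $y' := y(1-\delta)/(1+\delta)$, and build the weighted graph $\bar G$ as the disjoint union of an edgeless graph $G_0$ on $n/2$ vertices of weight $1-x'-y'$ each, a copy $G_+$ of $G$ with vertex weight $x'$, and a copy $G_-$ of $G$ with vertex weight $y'$. The prediction $\hat X$ would consist of $V(G_0)\cup V(G_+)$ (and no vertex of $V(G_-)$); run $A$ on $(\bar G, \hat X)$, split its output $\bar X$ into $X_0, X_+, X_-$ according to the three components, and return whichever of $X_+, X_-$ is larger.

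The third step is the arithmetic. An optimal solution of $\bar G$ is $V(G_0)$ together with copies of a maximum independent set $I^*$ of $G$ inside $G_+$ and $G_-$, so $\OPT = (1-x'-y')(n/2) + (x'+y')|I^*|$, $\eta^+ = x'(n-|I^*|)$, and $\eta^- = y'|I^*|$. Using $|I^*|\in[(1-\delta)n/2,(1+\delta)n/2]$ and a direct computation in the spirit of~\eqref{eq:lb-min} and~\eqref{eq:lb-min2}, one obtains $\eta^+/\OPT \leqslant x$ and $\eta^-/\OPT \leqslant y$, so by monotonicity of $f$ the algorithm's output satisfies $\bar w(\bar X) \geqslant (1-(x+y)+\epsilon)\OPT$. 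Combining this with the trivial upper bound $\bar w(\bar X) \leqslant (1-x'-y')(n/2) + x'|X_+| + y'|X_-|$ (which uses $|X_0|\leqslant n/2$) yields $x'|X_+| + y'|X_-| \geqslant ((x'+y') - (x+y) + \epsilon - O(\delta))(n/2)$. Because $(x+y) - (x'+y') = (x+y)\cdot 2\delta/(1+\delta) = O(\delta)$, the right-hand side strictly exceeds $(x'+y')\delta n$ for our choice $\delta = \epsilon/20$, so at least one of $|X_+|, |X_-|$ must be greater than $\delta n$, giving the required independent set in $G$. Finally, the reduction can be converted into an unweighted one by the same copying trick used at the end of the proof of Theorem~\ref{thm:lb-min}.

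The main obstacle will be the arithmetic balancing in the third step: one must ensure that the scaled parameters $x', y'$ keep $\eta^\pm/\OPT$ at most $x, y$ even though $|I^*|$ ranges over an interval of width $2\delta n$, while still preserving enough of the $\epsilon$-slack in the approximation guarantee to force $|X_+| + |X_-|$ past $2\delta n$. The scaling $(1-\delta)/(1+\delta)$ together with $\delta = \epsilon/20$ should be calibrated exactly for this, paralleling the corresponding step in the proof of Theorem~\ref{thm:lb-min}.
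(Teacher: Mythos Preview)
Your proposal is correct and follows essentially the same approach as the paper's own proof: the same contradiction setup with $\epsilon$ and $\delta=\epsilon/20$, the same preprocessing via $2$-approximate vertex cover for the case $|I^*|>(1+\delta)n/2$, the same three-component construction $\bar G = G_0 \cup G_+ \cup G_-$ with the edgeless $G_0$ and the scaled weights $x',y'$, the same prediction $\hat X = V(G_0)\cup V(G_+)$, and the same concluding arithmetic showing one of $|X_+|,|X_-|$ exceeds $\delta n$. The only cosmetic difference is that the paper presents the final inequality as ``at least one of $|X_+|$ and $|X_-|$ exceeds $(\epsilon n/2 - 7\delta n/2)/2 > \delta n$'' rather than comparing against $(x'+y')\delta n$, but these are equivalent bookkeeping choices.
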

\begin{proof}
By contradiction,
consider $x,y$ such that $f(x,y) < x + y$.
We denote $\epsilon := (x+y) - f(x,y)$, and let $\delta := \epsilon/20$.
We proceed similarly to the proof of Theorem~\ref{thm:lb-min},
and we reach contradiction with Proposition~\ref{prop:IS}.

We construct algorithm $\bar A$ such that for any input graph
$G$ with maximum independent set $X^*$ of size
$|X^*| \in  [(1-\delta)\frac{n}{2}, (1+\delta)\frac{n}{2}]$
it produces an independent set $X$ of size $|X| > \delta n$,
certifying that $G$ does not belong to the NO case.
Note that graphs with an independent set larger than $(1+\delta)\frac{n}{2}$
have a minimum vertex cover of size smaller than $(1-\delta)\frac{n}{2}$.
Therefore, in such graphs we can find a 2-approximate solution $C$ of the minimum vertex cover problem and
then $V(G)\setminus C$ will be an independent set of size $n - |C| > n - (1-\delta)n = \delta n$.

Given the input graph $G$, $\bar A$ constructs a (weighted) graph $\bar G$
and a prediction $\hat X$, 
and uses them as an input for algorithm $A$.
Graph $\bar G$ is a disjoint union of three graphs: $G_0$, $G_+$, and $G_-$. Let $x' := \frac{1-\delta}{1+\delta}x$ and $y' := \frac{1-\delta}{1+\delta}y$.
\begin{itemize}[left=0pt]
\item
$G_0$ is a graph with $n/2$ vertices and no edges, we have
$\bar w(v)=(1-x'-y')$ for each $v\in V(G_0)$.
\item $G_+$ is a copy of $G$ with scaled weights, we have
$\bar w(v) = x'$ for each $v\in V(G_+)$.
\item $G_-$ is a copy of $G$ with scaled weights, we have
$\bar w(v) = y'$ for each $v\in V(G_-)$.
\end{itemize}
The prediction
$\hat X$ contains all vertices from $G_0$,
all vertices of $V(G_+)$ and no vertices of $V(G_-)$.
Receiving a solution $\bar X$ found by algorithm $A$ on $\bar G$,
we denote $X_+ = \bar X \cap V(G_+)$ and  $X_- = \bar X \cap V(G_-)$.
$\bar{A}$ returns the larger of the two sets $X_+$ and $X_-$, which we denote by $X$.
We now prove that $|X| > \gamma n$ if $G$ is the YES case.

First, we show that $\eta^+/\OPT \leqslant x$ and
$\eta^-/\OPT \leqslant y$.
Let $\OPT$ be the value of an optimal solution on $\bar G$.
We have $\OPT = (1-x-y)\frac{n}2 + x|X^*| + y|X^*|$ which belongs to
$\big[(1-\delta)\frac{n}2, (1+\delta)\frac{n}2\big]$ by the assumption about
the size of $X^*$.
For the error, we have
\begin{align*}
\eta^+ &= x'\cdot (n-|X^*|)=\frac{1-\delta}{1+\delta}x\cdot (n-|X^*|),\quad\text{and}\\
\eta^- &= y'\cdot |X^*| = \frac{1-\delta}{1+\delta}y\cdot |X^*|.
\end{align*}
Since both $(n-|X^*|)$ and $|X^*|$ are at most $(1+\delta)\frac{n}{2}$
and $\OPT \geqslant (1-\delta)\frac{n}{2}$,
we have $\eta^+/\OPT \leqslant x$ and $\eta^-/\OPT \leqslant y$.
By monotonicity of $f$, this implies that $\bar X$ is at least a $(1-x-y+\epsilon)$-approximate solution. Therefore, we have
\begin{align*}
\bar w(\bar X)
&\geqslant (1-x-y+\epsilon)\cdot\OPT\\
&\geqslant (1-x-y)\cdot (1-\delta)\frac{n}{2} + \epsilon\cdot (1-\delta)\frac{n}{2}\\
&\geqslant (1-x-y)\frac{n}{2} - 3\delta \frac{n}{2} + \epsilon \frac{n}{2}.
\end{align*}
On the other hand, we can write
\begin{align*} \bar w(\bar X)
&\leqslant  (1-x'-y')\frac{n}{2} + x'|X_+| + y'|X_-|\\
&\leqslant (1-x-y)\frac{n}{2} + x'|X_+| + y'|X_-| + 4\delta\frac{n}{2}.
\end{align*}
In the last step, we use the definition of $x'$ and $y'$, and
$\frac{1-\delta}{1+\delta} \geqslant 1-2\delta$.
The two equations above imply that at least one of $|X_+|$ and $|X_-|$
must be larger than $(\epsilon \frac{n}{2} - 7\delta\frac{n}{2})/2 > \delta n$,
by our choice of $\delta$.
In other words, we have that
$|X_+| > \delta n$ or $|X_-| > \delta n$, which concludes the proof.

In order to transform $\bar G$ into an unweighted instance,
we take $\smash{\frac{1}{x'y'}}$ copies of $G_0$, $\smash{\frac{1}{(1-x'-y')y'}}$ copies of $G_+$
and $\smash{\frac{1}{(1-x'-y')x'}}$ copies of $G_-$, each vertex having weight 1.
\end{proof}

\section{Experimental evaluation}
\label{sec:experiments}

In this section we present an experimental evaluation of our refined learning-augmented algorithm for the Steiner Tree problem from Section~\ref{sec:steiner-tree}. The source code is available on GitHub\footnote{Available at \href{https://github.com/adampolak/steiner-tree-with-predictions}{\texttt{github.com/adampolak/steiner-tree-with-predictions}}.}.

\paragraph{Dataset.} We base our experiments on the heuristic track of the 2018 PACE Challenge~\citep{BonnetS18}, which is to our knowledge the most recent large-scale computational challenge concerning the Steiner Tree problem. In particular, we use their dataset\footnote{Available at \href{https://github.com/PACE-challenge/SteinerTree-PACE-2018-instances}{\texttt{github.com/PACE-challenge/SteinerTree-PACE-2018-instances}}.}, which consists of $199$ graphs selected from among the hardest instances in the SteinLib library~\citep{SteinLib}.\footnote{As of 2018, a majority of these instances could not be solved to optimality within one hour with state-of-the-art solvers. The average number of vertices is $\approx 27$k, the average number of edges is $\approx 48$k and the average number of terminals is $\approx 1$k.} 

\paragraph{Algorithms.} We also use PACE as the source of the state-of-the-art Steiner Tree solver, with which we compare our algorithm. More specifically, in our experiments we evaluate three algorithms:
\begin{itemize}[left=0pt]
\item \textbf{Mehlhorn}'s $2$-approximation algorithm, implemented by us in C++. Its mean empirical approximation factor on PACE instances that we observe is $\approx 1.17$. On average it spends $34$ milliseconds per instance, and never needs more than $500$ milliseconds.
\item \textbf{CIMAT}, the winning solver from the PACE Challenge, whose C++ implementation is available on GitHub~\citep{CIMAT}. It is an evolutionary algorithm that can be stopped at any time and outputs the best solution found so far. It was designed to run for $30$ minutes per instance, as this was the time limit in PACE. However, we noticed that on $95\%$ of instances already after one minute it produces a solution within $1.01$ of the optimum, and therefore we decided to always run it only for one minute, in order to keep the computational costs of the experiments down. We also remark that, on average, it took the CIMAT solver $\approx10$ seconds to output a first feasible solution. In all our experiments, we use the value of the solution returned by CIMAT as an estimate of the value of an optimal solution $\OPT$, which would be difficult and impractical to calculate exactly.
\item \textbf{ALPS}, our \textbf{al}gorithm with \textbf{p}rediction\textbf{s} from Section~\ref{sec:steiner-tree}, with the ``confidence'' hyperparameter $\alpha \in \{1.1, 1.2, 1.4, 2, 4, \infty\}$. In terms of efficiency it is virtually indistinguishable from Mehlhorn's algorithm, which it runs as a subroutine and which dominates its running time.
\end{itemize}

\paragraph{Predictions.}
We generate both synthetic predictions and learned predictions. For both types of predictions we vary their quality, which is captured by the parameter $p\in \{0.0, 0.1, \ldots, 1.0\}$; the higher the parameter $p$ the higher the prediction error. The predictions are then fed to our algorithm (ALPS), and we compare the quality of its outputs with those of the other two algorithms (CIMAT and Mehlhorn's).
\begin{itemize}[left=0pt]
    \item \textbf{Synthetic predictions.}
    In our first experiment, we simulate a predictor by introducing artificial noise to groundtruth labels. For each instance from the dataset, we compute a (near-)optimal solution $S\subseteq E$ using CIMAT. Then, for each value of the parameter $p$, we swap out a randomly selected $p$-fraction of the edges in $S$ for a randomly selected subset of edges from $E \setminus S$ of the same cardinality. This yields a prediction with error roughly $\eta^+ \approx p\cdot\OPT$ and $\eta^- \approx p\cdot\OPT$.
    \item \textbf{Learned predictions.}
    In our second, more realistic experiment we test our approach end-to-end, i.e., we first learn predictions, and then use them to solve instances not seen during learning. For this experiment, we need to work with \emph{distributions over instances} instead of individual problem instances. To this end, for each instance $I = (V, E, T, w)$ from our dataset, and for each value of the parameter $p$, we consider a distribution that returns instances of the form $I' = (V, E, T', w)$, with a $p$-fraction of the terminals resampled (but with the same underlying graph and edge weights). Actually, we consider two types of distributions, depending on how they resample terminals:
    \begin{itemize}
      \item \textbf{``Fixed core'' distribution.}
      We fix a $(1-p)$-fraction of the terminals $T_{\text{old}} \subseteq T$ and for each sampled instance $I'$ we swap out the remaining $p$-fraction by independently sampling $\lceil p \cdot |T| \rceil$ terminals $T_{\text{new}} \subseteq V \setminus T$, returning $I' = (V, E, T_{\text{old}} \cup T_{\text{new}}, w)$.
      \item \textbf{``No core'' distribution.}
      Each time we sample a fresh $(1-p)$-fraction of the terminals in $T$ to obtain $T_{\text{old}}$ (i.e., $T_{\text{old}} \subseteq T$ is not fixed over all the samples).
    \end{itemize}
    For each such distribution, we sample $k=10$ instances, and we compute a (near-)optimal solution to each of them using CIMAT. Then, we evaluate ALPS on each such sampled instance using the prediction learned from (the solutions to) the remaining $k-1=9$ instances sampled from the same distribution (i.e., we preform leave-one-out cross-validation).
    To learn the prediction from a set of left-out instances, we predict edge $e \in E$ iff it appeared in more than half of the (near-)optimal solutions to the left-out instances, which coincides with empirical risk minimization of the combined prediction error $\eta_+ + \eta_-$.
\end{itemize}

\paragraph{Evaluation metrics.} 
To evaluate the performance of our algorithms, we use two different measures.
For the synthetic predictions we look at the (empirical) approximation ratios observed for each of the algorithms and averaged over all the instances (see Figure~\ref{subfig:a}).
For the learned predictions such average would not be a useful metric -- this is because swapping out a set of terminals often completely changes both the value of an optimal solution and the relative performance of Mehlhorn's algorithm (see Figures~\ref{subfig:b}--\ref{subfig:d}).
For this reason, before averaging over the instances, we normalize the solutions costs so that $0$ corresponds to the optimum and $1$ corresponds to the performance of Mehlhorn's algorithm. Specifically, denoting by $c_{\ALPS(\alpha)}, c_{\OPT}$, and $c_{\MST}$ the cost of the solution output by ALPS (with parameter $\alpha$), the optimum value and the cost output by Mehlhorn's algorithm, respectively, we define the normalized cost of ALPS as
$\frac{c_{\ALPS(\alpha)} - c_{\OPT}}{c_{\ALPS(\alpha)}-c_{\MST}}.$
For predictions of varying quality $p$ and for different values of $\alpha$, this measure then accurately reflects how ALPS interpolates between the optimum ($0$) and Mehlhorn's algorithm~($1$).

\begin{figure*}
    \subfloat[Synthetic predictions. Average approximation ratio over all instances.]{%
        \includegraphics[width=.48\linewidth]{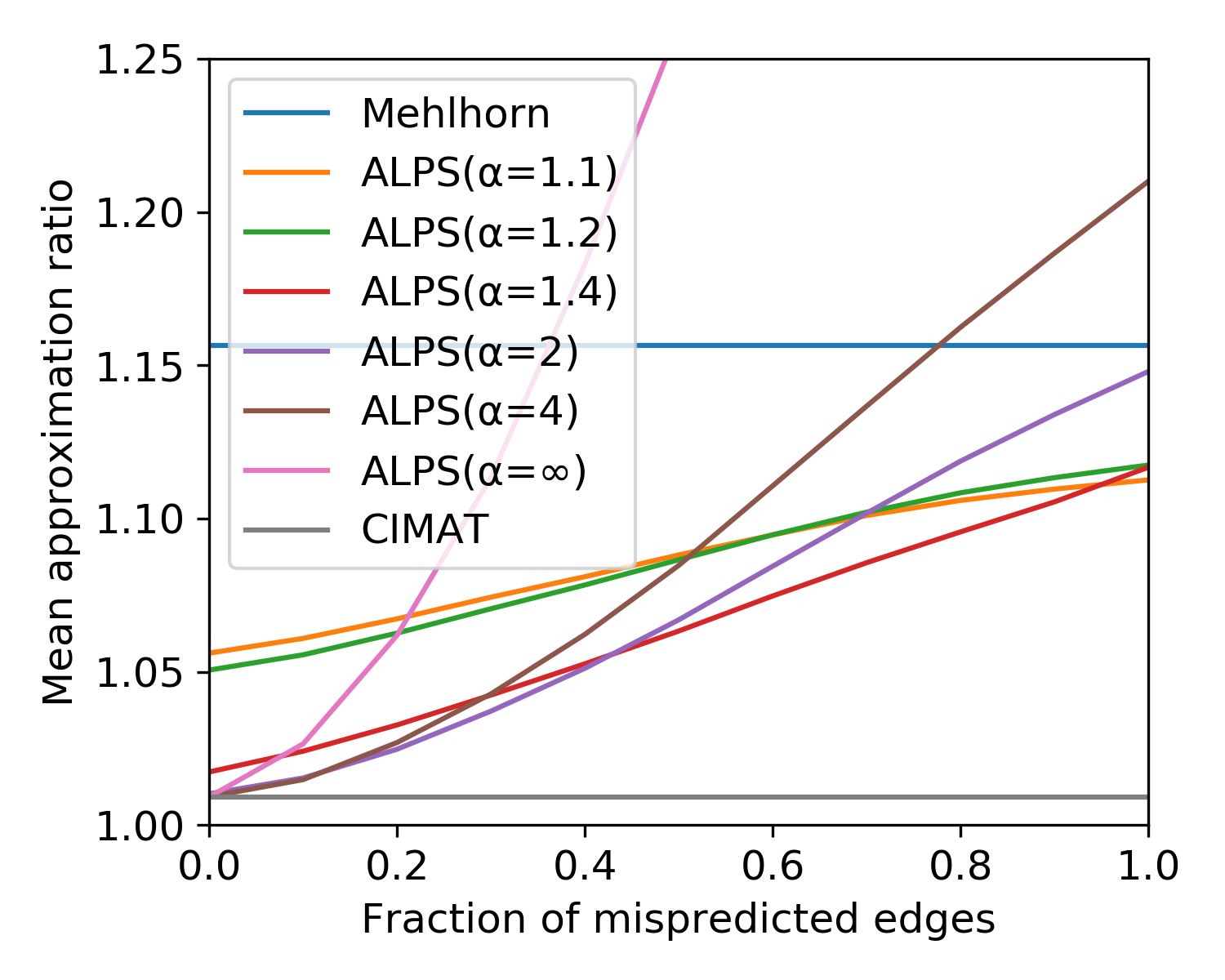}%
        \label{subfig:a}%
    }\hfill
    \subfloat[Learned predictions. ``Fixed core'' distribution based on instance 011.]{%
        \includegraphics[width=.48\linewidth]{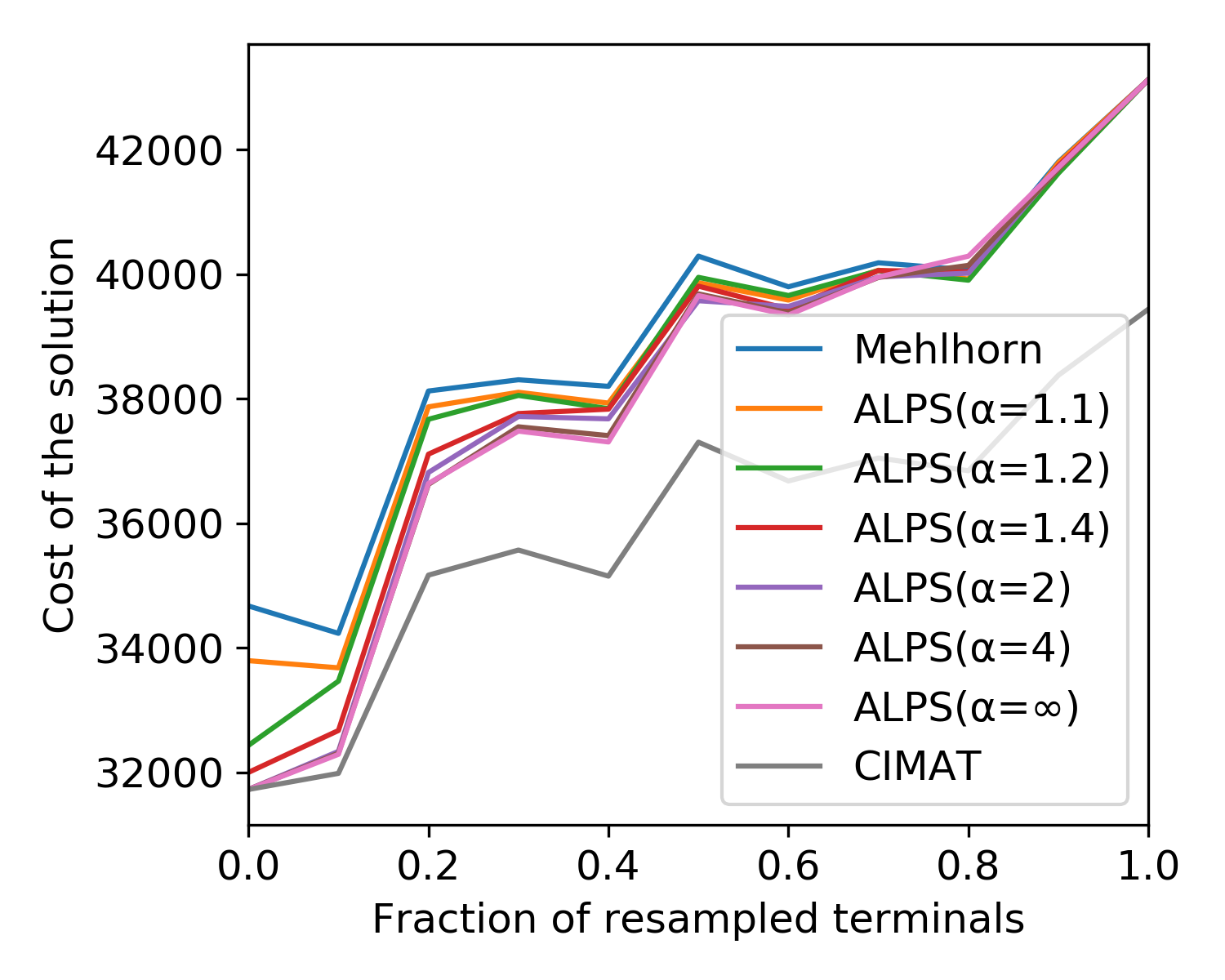}%
        \label{subfig:b}%
    }\\
    \subfloat[Learned predictions. ``No core'' distribution based on instance 082.]{%
        \includegraphics[width=.48\linewidth]{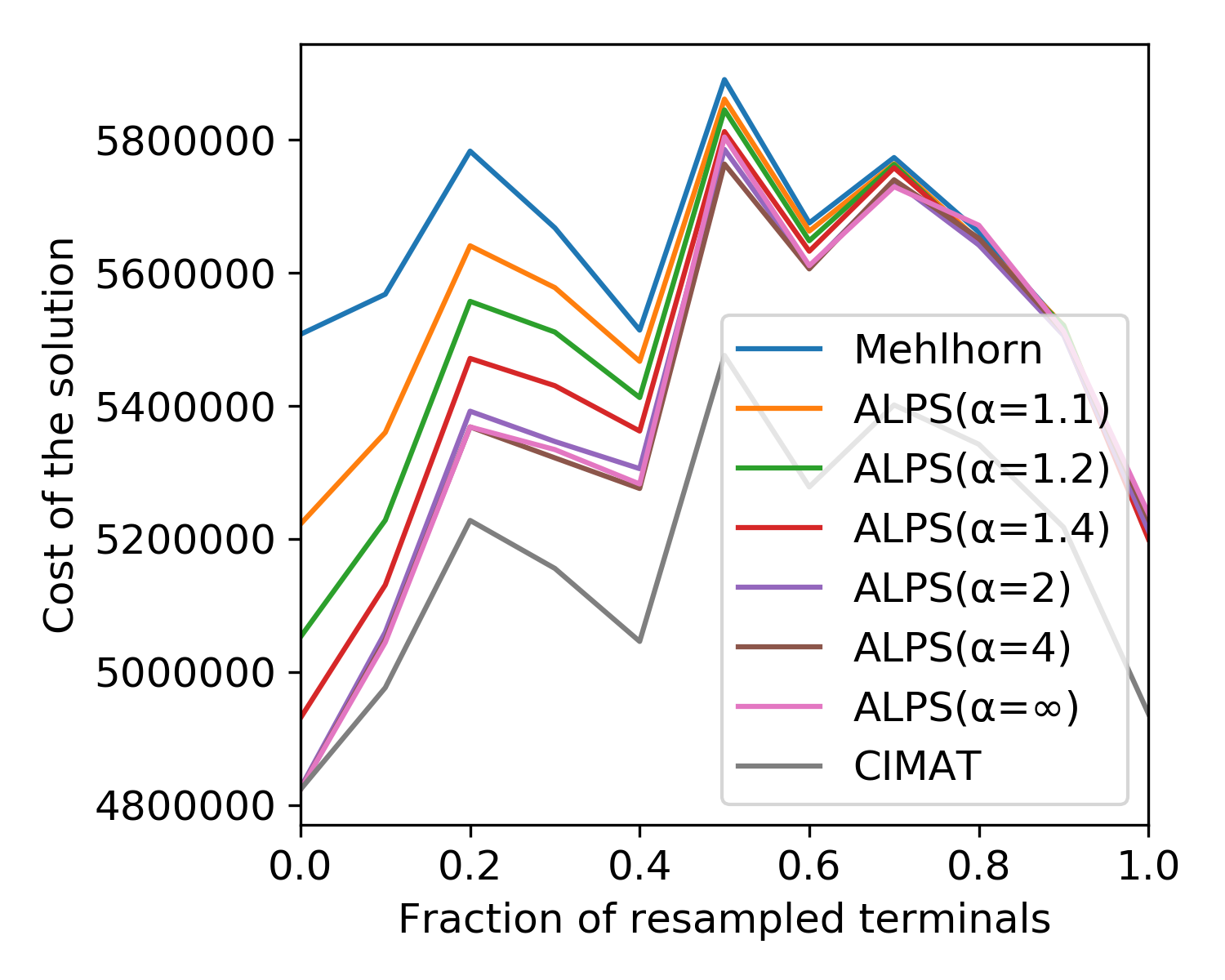}%
        \label{subfig:c}%
    }\hfill
    \subfloat[Learned predictions. ``Fixed core'' distribution based on instance 178.]{%
        \includegraphics[width=.48\linewidth]{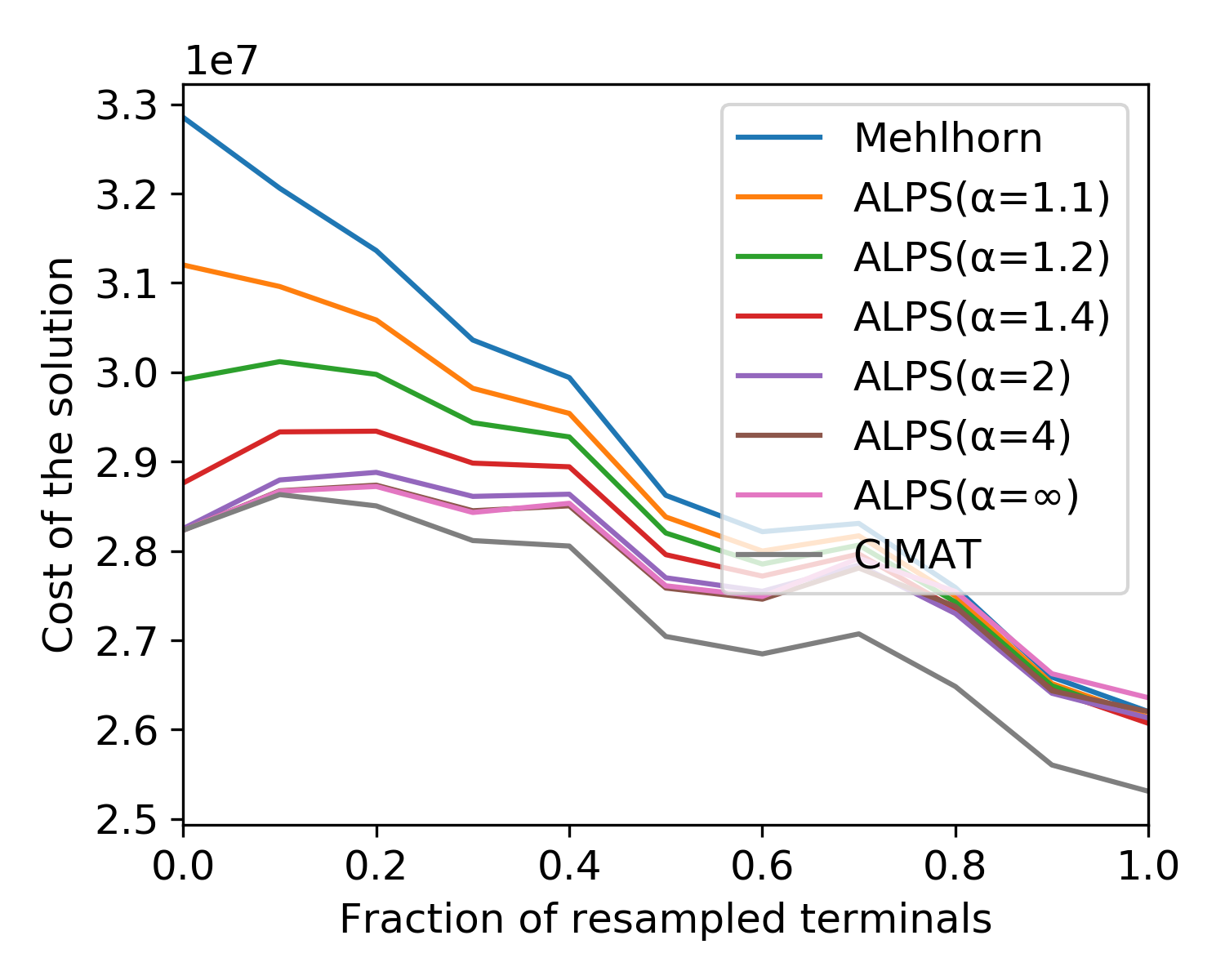}%
        \label{subfig:d}%
    }\\
    \subfloat[Learned predictions. Normalized cost averaged over all distributions with fixed core.]{%
        \includegraphics[width=.48\linewidth]{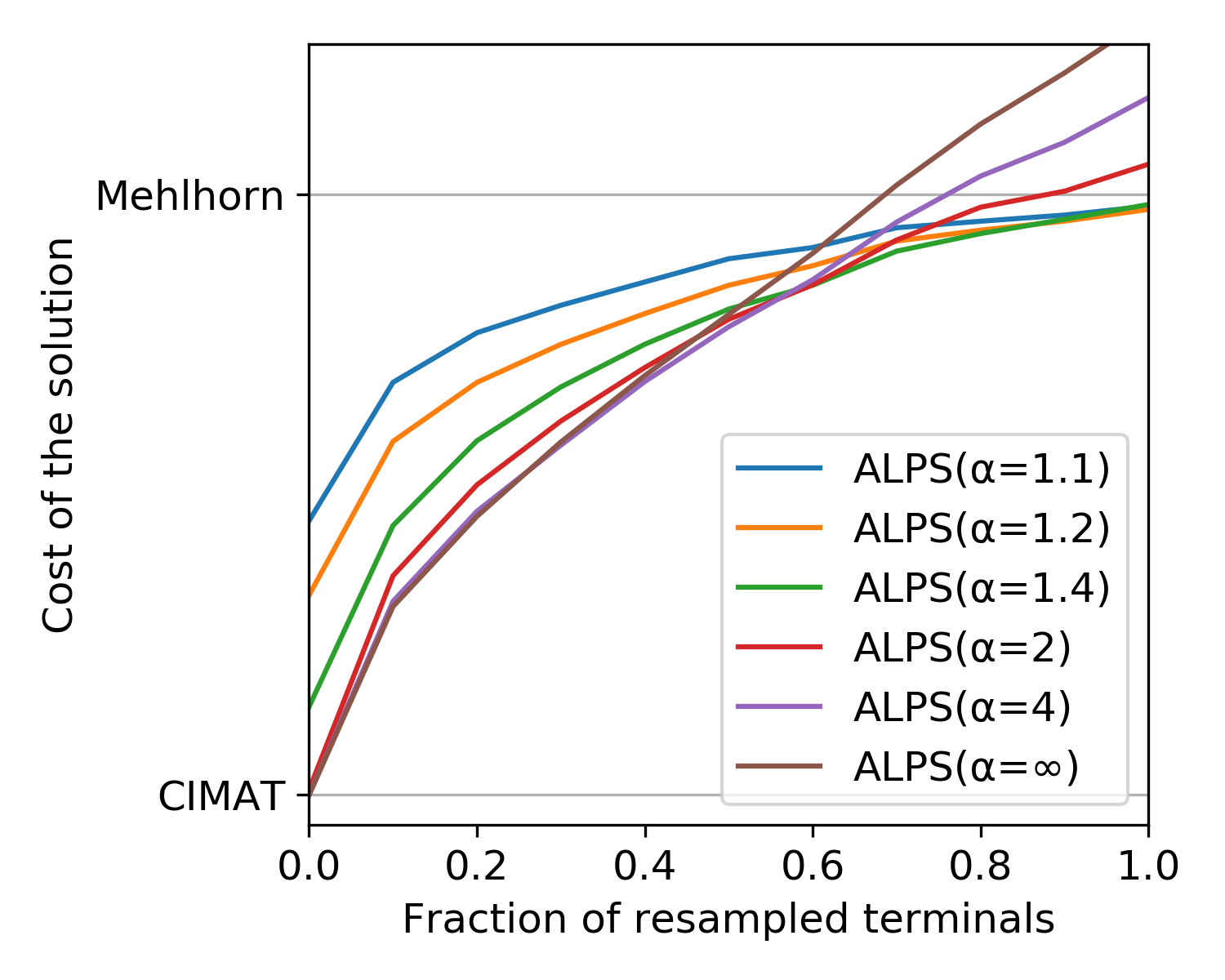}%
        \label{subfig:e}%
    }\hfill
    \subfloat[Learned predictions. Normalized cost averaged over all distributions with no core.]{%
        \includegraphics[width=.48\linewidth]{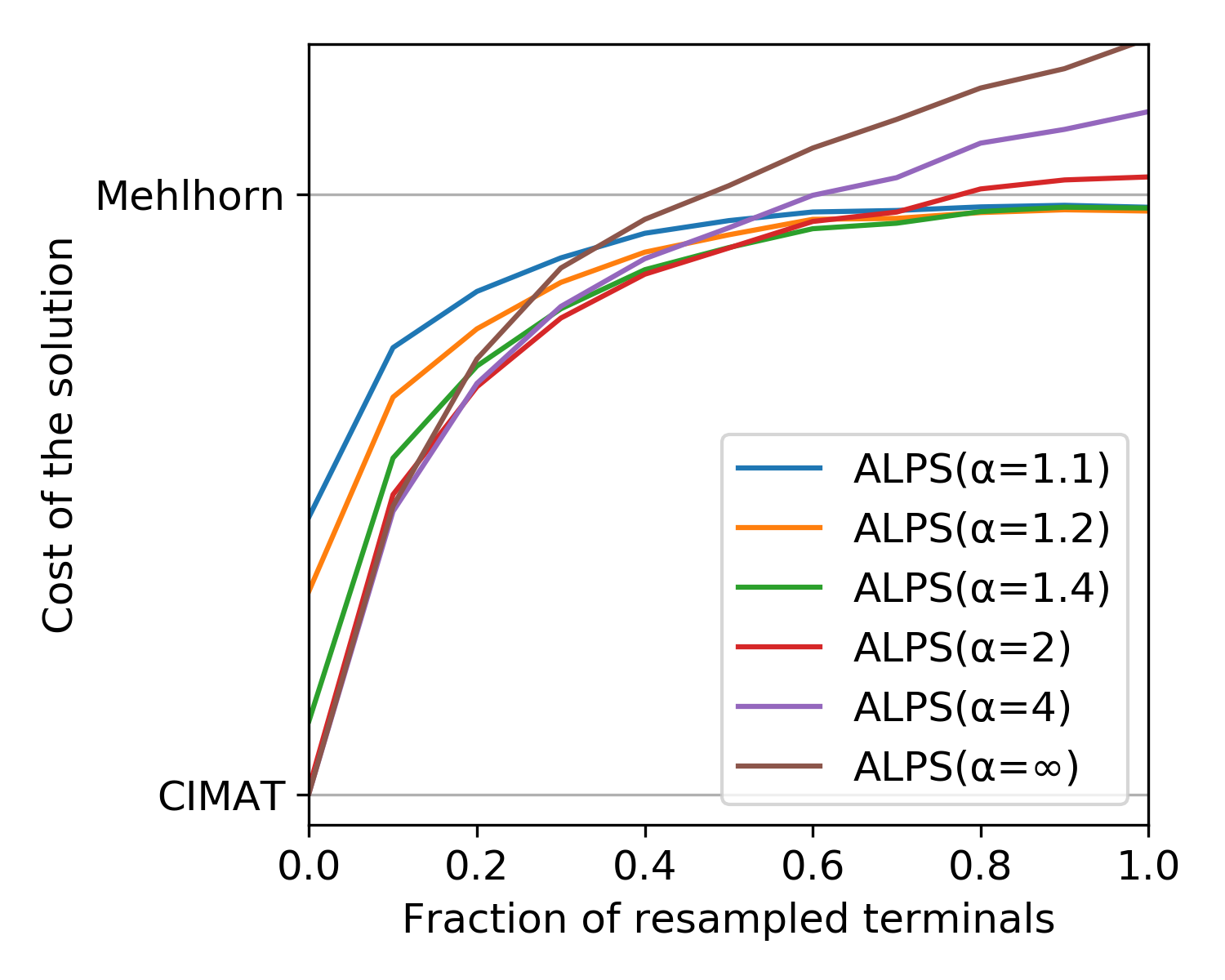}%
        \label{subfig:f}%
    }
    \caption{Experimental evaluation of our refined Steiner Tree algorithm from Section~\ref{sec:steiner-tree} (ALPS), with different values of the confidence parameter $\alpha$, compared to an equally fast classic approximation algorithm (Mehlhorn) and a much slower near-optimal solver (CIMAT). The x-axis represents the parameter $p$ controlling the prediction error: accurate predictions are to the left and erroneous predictions are to the right. The y-axis represents the value of the returned solution (the lower the better).}
    \label{fig:fig}
\end{figure*}

\paragraph{Results.} The results of our empirical evaluation are depicted in Figure~\ref{fig:fig}. Figure~\ref{subfig:a} displays the performance of ALPS with synthetic prediction, averaged over all instances. Figures~\ref{subfig:b},~\ref{subfig:c}, and~\ref{subfig:d} depict several typical behaviors on distributions of the Steiner Tree instances with resampled terminals, which we used in our experiment with learned predictions. Those figures illustrate the need to consider normalized costs. Finally, Figures~\ref{subfig:e} and~\ref{subfig:f} display the normalized costs of algorithms averaged over all distributions, with fixed core and with no core, respectively.

Out of the datapoints that we report, $95\%$ of them had standard deviations below $0.1$ of their values, and the maximum standard deviation among them was $0.25$ of the corresponding value. It would be interesting to run the experiments with a higher number of iterations in order to bring the standard deviations down, but our computational resources do not permit that (the described experiments already required $\approx 700$ CPU hours).

\paragraph{Conclusions.}
In all our experiments, ALPS (especially with the confidence parameter $\alpha \geqslant 1.4$) equipped with good predictions (low values of the parameter $p$, left side of the plots) outputs solutions almost as good as the (near-)optimal solutions output by CIMAT, which is orders of magnitude slower. On the other hand, the classic approximation algorithm of Mehlhorn, which has the same running time as ALPS, outputs solutions with a noticeably higher total cost.

Unsurprisingly, with the increasing prediction error (high values of the parameter $p$, right side of the plots) performance of ALPS slowly degrades. What is notable though is that already for moderate values of the confidence parameter ($\alpha \leqslant 2$) ALPS seems robust, i.e., it is never significantly worse than Mehlhorn, even with bad predictions. This is despite the fact that our ALPS implementation does not involve a separate robustification step (as in Corollary~\ref{cor:robust}).

As the theory predicts, in the case of accurate predictions it is better to choose a large $\alpha$ while for bad predictions a smaller $\alpha$ is better, and, as we explain in Section~\ref{sec:steiner-tree-alpha}, one may want to run ALPS for a geometric progression of $\alpha$'s and pick the best solution. However, in our experiments, it seems that choosing just a single $\alpha=1.4$ or $\alpha=2$ is a good choice almost universally.

Finally, it is worth to note that, somewhat surprisingly, synthetic predictions with $p=1.0$, which are almost completely random, still allow (for small enough $\alpha$) to improve over Mehlhorn's algorithm. This surprising behavior can be explained by the fact that instances that are relatively hardest for the Mehlhorn's algorithm tend to be very symmetric and to have multiple (close-to-)optimal solutions, and it seems that randomly decreasing a small fraction of edge weights by a small factor $\alpha$ breaks the symmetry and guides the algorithm towards a good solution.

\section{Discussion}
We initiated the study of algorithms with predictions with a focus on improving over the approximation guarantees of classic algorithms without increasing the running time. This paper focused on the wide and important class of selection problems, but it would be interesting to investigate whether similar results can be obtained for central combinatorial optimization problems that do not belong to this class, e.g., clustering and scheduling problems or problems with non-linear (e.g., submodular) objectives. 

We demonstrated, with the example of the Steiner Tree problem, that refined algorithms with improved guarantees are possible for specific problems. A second, implicit, advantage of our refined Steiner Tree algorithm is that its actual performance could be bounded in terms of a quantity directly related to the optimal cost, thus avoiding the additional robustification step. An interesting direction for further research would be to identify other problems satisfying these two properties.

\bibliography{submission}
\bibliographystyle{plainnat}

\end{document}